\newtheorem{lemma}{Lemma}
\newtheorem{theorem}{Theorem}
\begin{document}

\title{A Robust Advantaged Node Placement Strategy for Sparse Network Graphs}

\author{ Kai~Ding, ~\IEEEmembership{Student Member,~IEEE, } 
		 Homayoun~Yousefi'zadeh, ~\IEEEmembership{Senior~Member,~IEEE,}\\ 
		 and
		 Faryar~Jabbari, ~\IEEEmembership{Senior~Member,~IEEE}
\IEEEcompsocitemizethanks{\IEEEcompsocthanksitem This work was supported in part by the DARPA GRAPHS program Award N66001-14-1-4061. An earlier version of this work appears in \cite{GDO}. \protect\\ 
\IEEEcompsocthanksitem 	Kai Ding and Faryar Jabbari are with the Department of Mechanical and Aerospace Engineering, UCI . Homayoun Yousefi'zadeh is with Center for Pervasive Communications and Computing, UCI.  \texttt{email: [kaid1,hyousefi,fjabbari]@uci.edu}
		}}

\IEEEtitleabstractindextext{
\begin{abstract}
Establishing robust connectivity in heterogeneous networks (HetNets) is an important yet challenging problem. 
For a HetNet accommodating a large number of nodes, establishing 
perturbation-invulnerable connectivity is of utmost importance. 
This paper provides a robust advantaged node placement strategy best suited 
for sparse network graphs. 
In order to offer connectivity robustness, this paper models 
the communication range of an advantaged node with a hexagon embedded 
within a circle representing the physical range of a node. 
Consequently, the proposed node placement method of this paper is based on a so-called hexagonal coordinate system (HCS) in which we develop an extended algebra. 
We formulate a class of geometric distance optimization
 problems aiming at establishing robust connectivity of a graph of multiple clusters of nodes.
 After showing that our formulated problem is NP-hard, we utilize HCS to efficiently solve 
 an approximation of the problem.  
First, we show that our solution closely approximates an exhaustive search solution approach 
for the originally formulated NP-hard problem. Then, we illustrate its 
advantages in comparison with other alternatives  
through experimental results capturing 
advantaged node cost, runtime, and robustness characteristics.
The results show that our algorithm is most effective in sparse networks for which we derive 
classification thresholds.
\end{abstract}

\begin{IEEEkeywords}
Hexagonal Coordinate System, HetNets, Node Placement, Connectivity, Robustness, 
Minimum Spanning Tree.
\end{IEEEkeywords}
}
\maketitle

\IEEEraisesectionheading{\section{Introduction}}
\IEEEPARstart Establishing connectivity in heterogeneous networks has been of high significance in the studies of HetNets in
MANETs, WSNs, and multi-facility locations \cite{MFLP}.
HetNets are typically composed of nodes with different capabilities and are formed by a collection of clusters. Generally, each cluster contains several standard nodes 
with short communication ranges and a cluster head node \cite{Survey}. The cluster head node is an advantaged node serving as the gateway of this cluster in communication with other cluster heads. 
Connectivity scenarios of multi-tier networks have found extensive applications in different disciplines including but not limited to health surveillance, environment monitoring, earthquake detection, and Internet of Things (IoT). In all these applications, a large number of low-capability standard nodes (SNs) rely on a small number of advantaged nodes (ANs) to communicate. 

Similar to literature work of \cite{Hou,Lifetime1}, 
this paper assumes HetNets are formed by SNs arranged in clusters with each cluster 
designated an AN gateway. AN gateways are assumed to have much longer communication ranges 
and able to simultaneously connect  to multiple nodes \cite{AP}. 
While the assumption guarantees intra-cluster connectivity 
and a certain length of life-time \cite{Lifetime1}, inter-cluster connectivity still needs to be 
established by placement of additional intermediate ANs. 
Lin and Xue \cite{MSTh} abstract this problem in the form of 
a \emph{Steiner minimum tree problem with minimum number 
of Steiner points and bounded edge length}. 
We refer to this algorithm as SMT not to be confused with MST used to represent 
minimum spanning trees.
Lin and Xue provide an approximation algorithm to the original NP-complete problem 
with a polynomial time complexity and a performance ratio of $5$. 
This algorithm lays the groundwork of several other approximation algorithms with smaller (better)  performance ratios \cite{Lloyd, ApproxSMT, XCheng, EBST}. 
In \cite{AP}, the authors develop a node placement algorithm for clustered ad-hoc networks 
subject to capacity constraints.  
Other related works, albeit at small scale sensor networks, include \cite{largescale,Traff,Lifetime1, Lifetime2} in which energy  
and network lifetime constraints are emphasized in node placement. 

All of the above algorithms use the Gilbert disk connectivity model
\cite{Gilbert, NLi, JPan}
representing the communication range of an AN as a circle. 
One disadvantage of this model is lack of boundary connectivity robustness 
where the distance between two centers is close to the distance threshold of
connectivity $d$.   
In such cases, a pair of connected nodes can easily become disconnected 
as the result of small position perturbations, a phenomenon occurring 
frequently and unpredictably, especially in harsh environments. 
To compensate against these cases, fault-tolerant $k$-connectivity ($k \geq 2$) node placement 
algorithms have been developed 
\cite{FTRelay, largescale, FTinHet, faulttol, FTsensor}.
By using a much larger number of ANs, these algorithms guarantee 
there are always $k$ different paths between each pair of ANs.

In addition to the disadvantage above, SMT-based methods  
are subject to a second yet major disadvantage.
Since the minimum spanning tree is formed once statically  
to represent the topology of the network graph,  
SMT-based methods do not consider the effects of changes to minimum spanning 
tree as the result of placing ANs in subsequent iterations. 
This can lead to potentially over utilizing AN resources, 
since it is possible to establish connectivity with a smaller number of ANs. 

As detailed in Section \ref{hcsSec} and Section \ref{caa}, 
this work provides a dynamic strategy for AN placement 
capable of dynamically considering the effects of changes to minimum spanning 
tree while offering robust network connectivity in the presence 
of perturbations. In essence, we seek an AN placement strategy that  
carries a certain level of robustness therein. 
To avoid the inherent problem of Gilbert disk model in boundary connectivity cases, 
we model the communication ranges of nodes as hexagons embedded within the circles representing 
the actual communication ranges of nodes. Two nodes are considered connected only when their associated  
hexagons have a common edge. 
Consequently, a pair of connected nodes actually have a margin of perturbation conserving connectivity. 
Projecting the node placement problem into HCS with integer coordinates
allows us to utilize the higher computation efficiency of HCS compared to 
a conventional Cartesian Coordinate System (CCS) in minimizing the number 
of intermediate ANs, identifying their positions, and accounting for topology perturbations.

In our work, we consider a two-tier graph of nodes in which clusters of SNs are to be connected with a minimum number of ANs. ANs are distinguished from SNs by their higher ranges of communication
and ability to simultaneously connect to a large number of standard nodes. Each cluster of SNs is assumed to be equipped with an AN allowing full connectivity of the nodes within the cluster. Multiple clusters of SNs may or may not be connected depending on their separation distance.  
It is important to note that inter-cluster connectivity as facilitated by ANs is mostly a function of distance as opposed to interference because of the much larger separation distances of ANs 
and much stronger power profiles compared to SNs.

The main contribution of our work is as follows. 
First and for the purpose of offering robustness, we introduce a hexagonal coordinate 
system and develop associated extended algebra. 
Relying on the proposed HCS, we then formulate a class of geometric distance optimization problems aiming at finding the minimum number of ANs and their positions 
to guarantee robust connectivity of a given HetNet. 
We prove that our formulated problem is 
NP-hard and offer an exhaustive search algorithm for solving this 
NP-hard problem as well as a low complexity algorithm for solving an approximation 
of this problem. 
We show our heuristic solution closely tracks 
the exhaustive search algorithm while enjoying excellent node cost, runtime, and robustness characteristics compared to other alternatives.
Our proposed approximation algorithm utilizes far fewer ANs 
than a $k$-connected network. 
This is because establishing 
a $k$-connected network requires many additional edges to a graph so as to preserve connectivity under $(k-1)$ edge or vertex cuts. Naturally, adding edges will increase the number of intermediate ANs.

The rest of the paper is organized as follows. 
Section II describes 
connectivity model. In Section III, the hexagonal 
coordinate system and the associated algebra are introduced.
Section IV describes the formulation of the connectivity problem, the proof of NP-hardness, and an exhaustive search algorithm solving the problem. Section V includes the heuristic node placement algorithm and the associated analysis. Section VI contains our experimental results. 
Finally, Section VII concludes the paper.

\section{Connectivity Model}
Based on the landmark Gilbert connectivity model \cite{Gilbert}, early connectivity models in network graphs mainly consider the distance between nodes. Later, a number of more realistic models \cite{MIMO, Doussee, Lin} were established to capture connectivity using propagation, fading, shadowing, signal-to-interference-noise ratio (SINR), symbol error rate (SER), and capacity. A review of these recent works reveals that using a distance-based connectivity 
model is justified when high power long range communication  
dominates other factors such as interference, fading, and shadowing. Accordingly, this work 
assumes that ANs are characterized by longer communication ranges, higher powers, 
and higher lifetimes compared to SNs.

A pair of nodes \{$M$, $N$\} are considered to be bi-directionally 
connected if both $M$ and $N$ located within each other's communication range .
In the definition above, the distance between nodes is a realistic
measure of connectivity because inter-cluster communication relies on LOS links
established betweem high power ANs. 
For a pair of SN and AN nodes with ranges $r$ and $d$ in radii, 
connectivity is established only when the distance between two nodes is less than or equal to 
$\min\{r, d\}=r$.  

In our model, a number of SNs form a connected cluster for which the center of geometry can be calculated. A number of these clusters in a given area compose a network topology scenario. 
The location of clusters could be random or follow some certain distribution rule 
depending on the SN deployment preference. The 3 red dots in  Fig. \ref{coord} represent  
3 SNs with communication ranges of $r$ forming a sample cluster of SNs. 
Each cluster is assumed to be 
supported by an AN gateway node. This AN is typically located at the center of geometry of the cluster in order to maximize the number 
of SNs to which it is directly connected. Alternatively, 
AN gateways may have a small displacement from the center of geometry. Nonetheless, SNs within a cluster are all connected 
to the AN gateway node and able to communicate with nodes outside of the cluster through the AN gateway node. 
Thus, the problem of global connectivity is converted to connecting individual clusters utilizing additional intermediate 
AN nodes as necessary. Based on the connectivity condition given above, one AN ought to locate within the communication range of another AN so as to establish connectivity. 
A pair of ANs with communication ranges of $d=2R$ are connected if the two circles with radii $R$ and centered around them overlap.

In order to provide a margin of robustness in presence of location perturbation,
we model the communication area of an AN by a hexagon with an edge length of $R$. 
Considering the extended range of AN compared to SN, we assume $R$ is approximately 
two orders of magnitude 
larger than $r$. Without loss of generality, the communication area of an AN can be set as a hexagon with an edge length of 
$(12n+7)r$ where $n$ is a positive integer chosen such that the expression accurately  approximates the value of $R$.
The length selection of $(12n + 7)r$ offers a couple of geometrical advantages. First, any vertex of a large hexagon overlaps with the vertex of a hexagonal cell at the same relative position. Second, center to edge distance of a hexagon is conveniently measurable by the distance measure defined in the next section. This distance relates to the minimum distance coverage by an AN and will be utilized in Section \ref{caa}.
Two ANs are then robustly connected if their associated hexagons  have a common edge.

\section{Algebra in Hexagonal Coordinate System}
\label{hcsSec}
First, the node placement problem is projected into a so-called hexagonal coordinate system. To set up the 
HCS, we have to specify the origin, axes, and coordinates. The origin is defined as the center of geometry of all clusters. From this 
origin, we start tiling the plane with hexagonal cells. These cells have an edge length equal to the communication radius of an 
SN, $r$. The first cell share the same center of geometry as the origin point with coordinates $(0, 0)$. Then, we establish the rest of the
tessellation with equal-sized hexagonal cells. Theoretically, an infinite tessellation can tile an infinite-extending plane without either overlapping 
or gaps. In practice, we stop when the area of interest is fully tiled. 
The $x$-axis goes through the origin and is perpendicular to a pair of parallel edges of the cell containing the origin. 
The $x$-axis cuts through all hexagonal cells along that direction through their center and edge. The $y$-axis is defined as the rotation 
of the $x$-axis by ${\pi}/{3}$ counter-clockwise, as shown in Fig. \ref{coord}. The $y$-axis also crosses the origin and vertically 
cuts across the edges of all cells along the way including origin.
\begin{figure}
\centering
\includegraphics[width=0.4\textwidth]{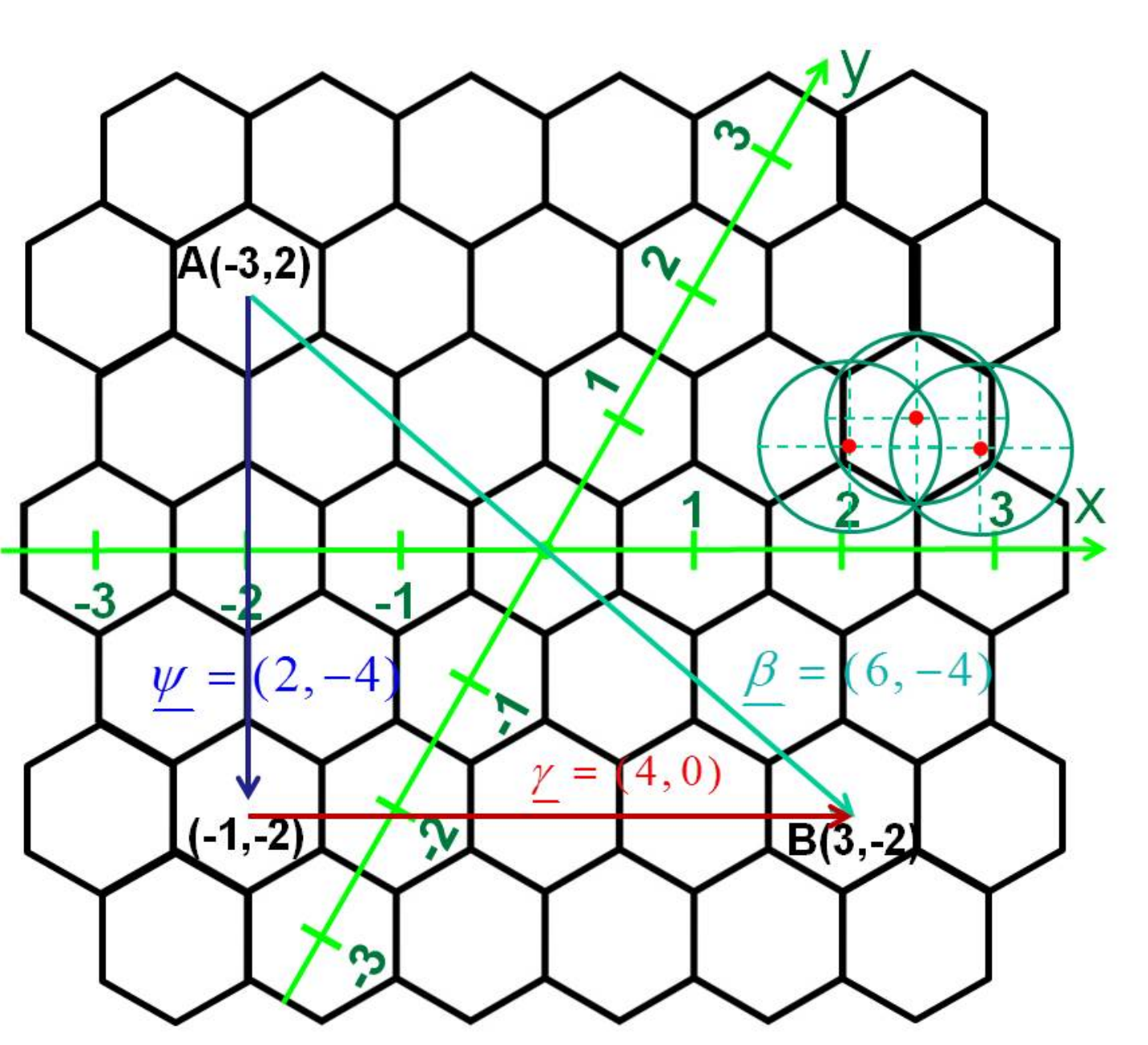}
\caption{The hexagonal coordinate system (HCS).}
\label{coord}
\end{figure}
In this coordinate system, coordinates are associated with those of a hexagonal cell unlike 
other coordinate systems such as that of \cite{Honey} in which the $x-$axis goes through the center and a cell vertex. 
Points $A$ and $B$ in Fig. \ref{coord} illustrate a pair of coordinate examples.

\subsection{Operation Definitions in HCS}
\subsubsection{Distance Measure}
Since a point in an HCS actually represents the location of a hexagonal cell, a distance measure between two hexagonal cells aims at 
counting the number of cells moving from one cell to another. The distance between point A and B in Fig. \ref{coord} serves as 
a typical example. 
For a given pair of points $M (m_1, m_2)$ and $N (n_1, n_2)$, the distance measure for the vector 
$\underline{\vartheta}=(M,N)$ is defined as follows.
\begin{equation}
\begin{array}{ll}
& |\underline{\vartheta}|=|(M, N)| =  (m_1-n_1, m_2-n_2) \\
 & \quad =  \mathbf{max}  \{|m_1-n_1|, |m_2-n_2|, |m_1-n_1+m_2-n_2|\} 
\end{array}
\label{dis}
\end{equation}
For example, $\underline{\beta}=(A,B)$ 
is a vector starting at the center of cell $A$ and ending at the center of 
cell $B$ in Fig. \ref{coord}. 
The distance between $A$ and $B$ is $6$ representing the shortest path from $A$ to $B$ covers 6 cells. 
\begin{equation*}
 \underline{\beta}=(A,B)=(6,-4),
\quad
 |\underline{\beta}|=|(A,B)|= 6
\end{equation*}
\begin{theorem}
The distance measure defined by (\ref{dis}) is a distance.
\end{theorem}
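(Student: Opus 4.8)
The plan is to verify the four defining axioms of a metric for the function $d(M,N) := |(M,N)|$ given by (\ref{dis}): nonnegativity, the identity of indiscernibles, symmetry, and the triangle inequality. Writing $a = m_1 - n_1$ and $b = m_2 - n_2$, the measure takes the form $d(M,N) = \max\{|a|, |b|, |a+b|\}$, so it is convenient to regard it as $\|(a,b)\|$, where $\|(x,y)\| := \max\{|x|, |y|, |x+y|\}$ and $(a,b)$ is the coordinate difference $M - N$.

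The first three axioms I would dispatch immediately. Nonnegativity holds because $d$ is a maximum of absolute values. For the identity of indiscernibles, $d(M,N) = 0$ forces $|a| = 0$ and $|b| = 0$, hence $a = b = 0$ and $M = N$; the converse is clear. Symmetry follows because exchanging $M$ and $N$ negates both $a$ and $b$, which leaves each of $|a|$, $|b|$, and $|a+b|$ unchanged.

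The triangle inequality is the main obstacle, and the key observation is that the three quantities under the maximum are absolute values of \emph{linear} functionals of the coordinate difference. Fixing three points $M$, $N$, $P$, I would set $u = M - N$ and $v = N - P$ as coordinate differences, so that $M - P = u + v$. The functionals $\ell_1(x,y) = x$, $\ell_2(x,y) = y$, and $\ell_3(x,y) = x+y$ are each additive, so for every index $i$ I have $|\ell_i(u+v)| = |\ell_i(u) + \ell_i(v)| \le |\ell_i(u)| + |\ell_i(v)| \le d(M,N) + d(N,P)$, using the scalar triangle inequality and then the definition of $d$ as the maximum over the three functionals. Taking the maximum over $i \in \{1,2,3\}$ on the left yields $d(M,P) \le d(M,N) + d(N,P)$, which completes the argument.

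I would finally note that this same linear structure shows $\|\cdot\|$ is in fact a norm on $\mathbb{R}^2$, since a maximum of absolute values of linear forms is automatically subadditive and absolutely homogeneous; the claim could thus alternatively be obtained from the general fact that every norm induces a metric. The direct verification above is preferable here, however, because it is self-contained and makes the role of the three defining functionals of the HCS distance explicit.
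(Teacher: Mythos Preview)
Your proof is correct and, at its core, uses the same mechanism as the paper's: the triangle inequality is obtained by applying the scalar triangle inequality to each of the three components $|m_1-n_1|$, $|m_2-n_2|$, $|m_1-n_1+m_2-n_2|$ separately and then passing to the maximum. The paper carries this out as an explicit three-case split on which term realizes the maximum in $|(L,N)|$, whereas you package the same computation uniformly by naming the three linear functionals $\ell_1,\ell_2,\ell_3$ and bounding $|\ell_i(u+v)|$ once for a generic $i$. Your version is also more complete: the paper only checks nonnegativity and the triangle inequality, while you verify symmetry and the identity of indiscernibles as well, and your closing remark that $\|\cdot\|$ is a genuine norm on $\mathbb{R}^2$ (a maximum of absolute values of linear forms) gives a conceptual explanation for why the verification goes through so mechanically.
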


\begin{proof}
Noticing that a distance in HCS calculated through (\ref{dis}) is non-negative, it 
is left to prove the triangular inequity:
\begin{equation*}
\forall  L, M, N,   |(L, M)| + |(M, N)| \geq |(L, N)| 
\end{equation*}
We have the following three cases to consider.

\emph{Case 1} $ \text{RHS} = |l_1-n_1|$. 
In this case, we have 
\begin{equation*}
\begin{split}
\text{LHS} &  =\mathbf{max}  \{|l_1-m_1|, |l_2-m_2|, |l_1-m_1+ l_2-m_2|\} \\
 &  + \mathbf{max}  \{|m_1-n_1|, |m_2-n_2|, |m_1-n_1+m_2-n_2|\} \\
 & \geq |l_1-m_1| + |m_1-n_1|\\
 & \geq |l_1 - n_1| = \text{RHS}
\end{split}
\end{equation*}

\emph{Case 2} $ \text{RHS} = |l_2 - n_2|$.
In this case, we have
\begin{equation*}
\begin{split}
\text{LHS} &  =\mathbf{max}  \{|l_1-m_1|, |l_2-m_2|, |l_1-m_1+ l_2-m_2|\} \\
 &  + \mathbf{max}  \{|m_1-n_1|, |m_2-n_2|, |m_1-n_1+m_2-n_2|\} \\
 & \geq |l_2-m_2| + |m_2-n_2|\\
 & \geq |l_2 - n_2| = \text{RHS}
\end{split}
\end{equation*}

\emph{Case 3} $ \text{RHS} = |l_1 - n_1 + l_2 - n_2|$.
In this case, we have
\begin{equation*}
\begin{split}
\text{LHS} &  =\mathbf{max}  \{|l_1-m_1|, |l_2-m_2|, |l_1-m_1+ l_2-m_2|\} \\
 &  + \mathbf{max}  \{|m_1-n_1|, |m_2-n_2|, |m_1-n_1+m_2-n_2|\} \\
 & \geq |l_1-m_1+ l_2-m_2| + |m_1-n_1+m_2-n_2|\\
 & \geq |(l_1-m_1+ l_2-m_2) + (m_1-n_1+m_2-n_2)| \\
 & =  |l_1 - n_1 + l_2 - n_2| = \text{RHS}
\end{split}
\end{equation*}

\end{proof}

The vector addition rule in HCS follows that of Cartesian coordinate system. 
For vector $\underline{\psi}= (\psi_1, \psi_2)$ and $\underline{\gamma}= (\gamma_1, \gamma_2)$, we have 
$\underline{\psi} + \underline{\gamma} = (\psi_1+\gamma_1, \psi_2+\gamma_2)$.
\subsubsection{Inner Product}
The definition of inner production in an HCS is not the same as that of Cartesian coordinate system since the two basis vectors are not perpendicular to each other. Let's call $\underline{e}_1$ and $\underline{e}_2$ the two basis vectors in the HCS along x- and $y$-axis, respectively. 
We define the inner product as follows.
\begin{equation}
	\underline{x}\cdot \underline{y}=\underline{x}^T\Delta\underline{y}
\end{equation}
where $\underline{x} = x_1\underline{e}_1 + x_2\underline{e}_2$,	
$\underline{y} = y_1\underline{e}_1 + y_2\underline{e}_2$, 
$T$ represents the transpose operator, and $\Delta$ is a symmetric matrix defined below. 
\begin{equation}
\Delta = 
\begin{bmatrix}
	1 & \cos\dfrac{\pi}{3}\\
	\cos\dfrac{\pi}{3} & 1
\end{bmatrix}
\end{equation}
It is observed that the inner product of a pair of vectors is zero if they are perpendicular to each other, as shown by vector 
$\underline{\gamma}$ and $\underline{\psi}$ in Fig. \ref{coord}. Further, the inner product operation is commutative as $\Delta$ 
is a symmetric matrix.  
\subsection{Orientation of Distance Vector}
When dealing with the least number of ANs required to link two clusters, one has to realize that 
the maximum covered distance of an AN has its own orientation. 
When the distance vector between two clusters is closely aligned with $x$- or $y$-axis, one has to possibly use more ANs  than a case 
in which the distance vector is oriented at the direction ${\pi}/{6}$ away from each axis. In the latter case, one uses the length of 
diagonal to divide the distance and decide how many ANs are needed. Since we are mainly concerned about whether the distance 
vector is more aligned with $x$-, $y$-axis, or with the diagonals of the head and tail cluster, we take the basis axis as the reference 
point of the orientation. 
In the following subsections, we discuss a number of cases in which 
$\underline{\vartheta}$ is in different quadrants. We specify the quadrant in which 
$\underline{\vartheta}$ is located by inspecting a vector parallel to and of the same length as $\underline{\vartheta}$ 
with a starting point at the origin. 

\subsubsection{$\underline{\vartheta}$ is in the 1st or 3rd Quadrant}
As shown in Fig. \ref{13q}, the distance vector between cluster $A$ and $B$ is $(A,B)$. 
The orientation of $(A,B)$ is represented by $\theta$ which is between $(A,B)$ and $x$-axis.
\begin{figure}
\centering
\includegraphics[width= 0.4\textwidth]{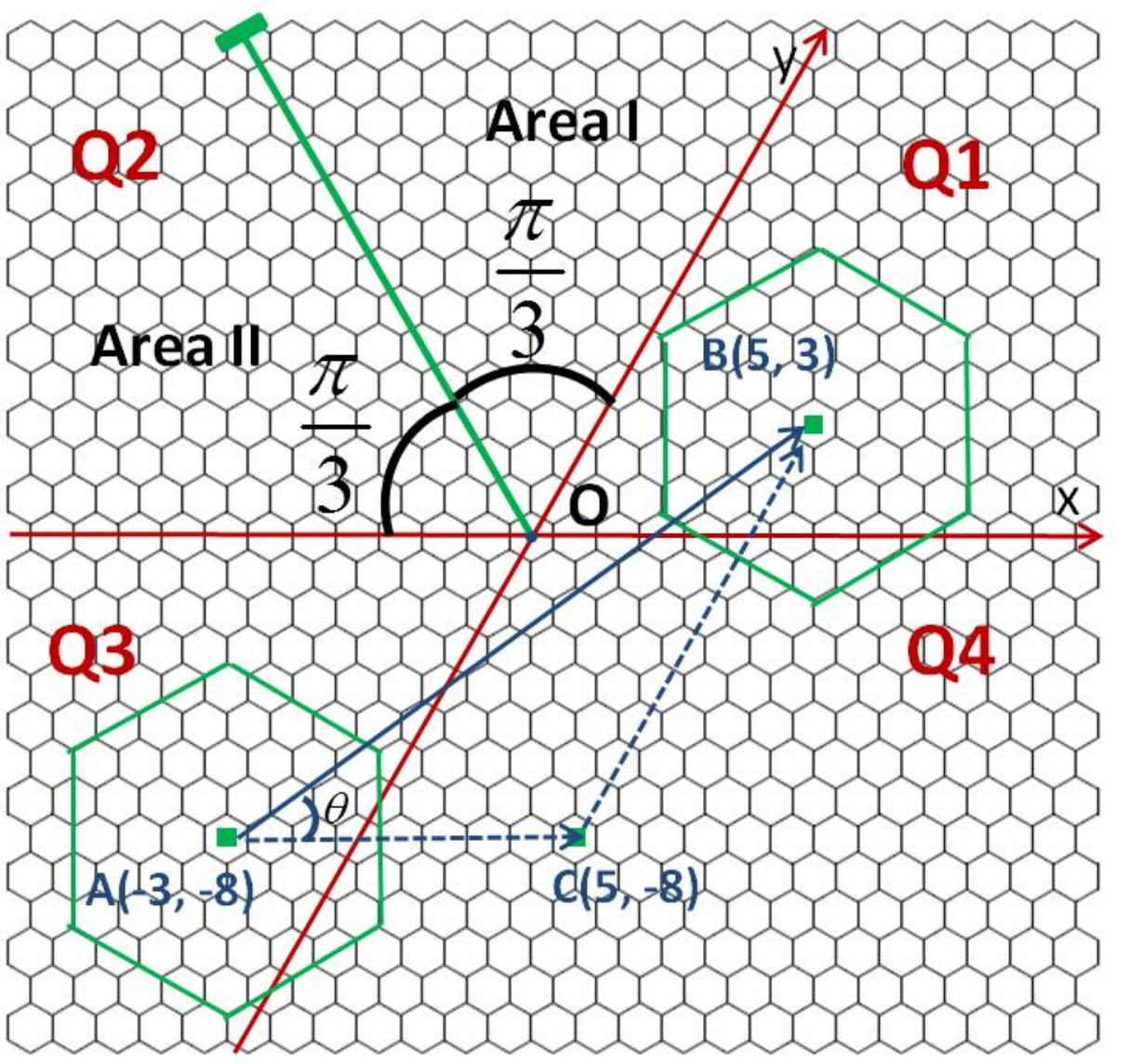}
\caption{Calculating angle $\theta$ between $\underline{\vartheta}$ and $x^{(+)}$-axis when 
$\underline{\vartheta}$ is in the 1st or 3rd quadrant. 
In the other two quadrants, the quadrant is divided into two areas  and 
$\theta$ is calculated with respect to $y^{(+)}$-axis in Area I or $x^{(-)}$-axis in Area II.}
\label{13q}
\end{figure}
\begin{figure}
\centering
\includegraphics[width=0.4\textwidth]{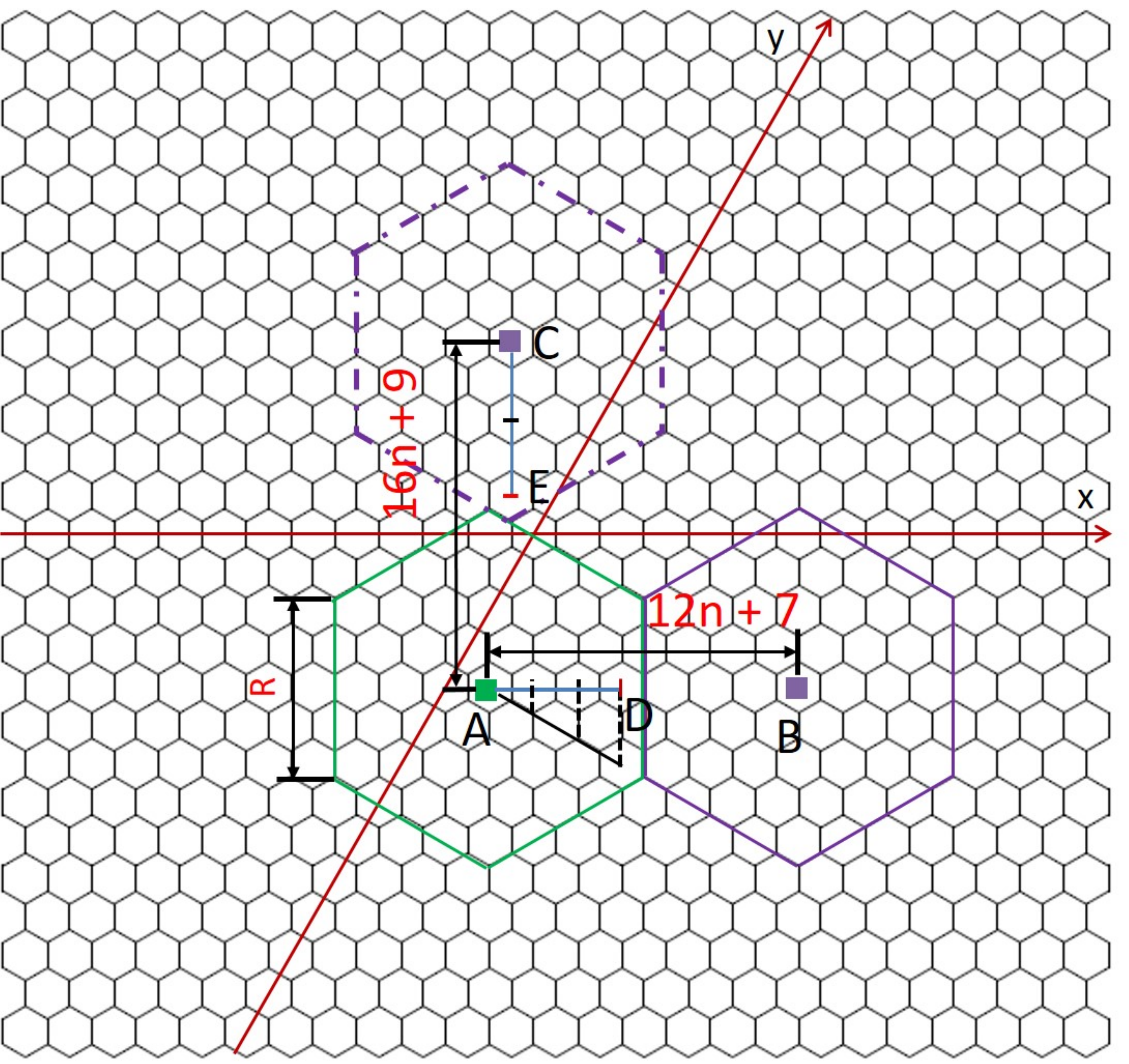}
\caption{The longest possible distance covered by an AN. }
\label{cov}
\end{figure}
In the triangle formed by $A$, $B$, and $C$, we can identify the value of $\theta$ 
from the Law of Sines, with $x=|(A,C)|$ and $y=|(C,B))|$, as 
$
\frac{1}{y}\sin{\theta} = \frac{1}{x}\sin(\frac{\pi}{3}-\theta)
$.
Hence, 
\begin{equation}
	0 < \theta = \tan^{-1}\big({\frac{\sqrt{3}\,y}{2x+y}} \big) < \frac{\pi}{3}
\label{th13}
\end{equation}
\subsubsection{$\underline{\vartheta}$ is in the 2nd or 4th Quadrant}
In HCS, quadrants 2 and 4 are larger in area than quadrants 1 and 3. The angle between $y^{(+)}$-axis and $x^{(-)}$-axis is 
${2\pi}/{3}$ which exceeds ${\pi}/{2}$. Whether we take $y^{(+)}$- or $x^{(-)}$-axis as the reference, the method of the former subsection leads to a point of discontinuity in Eq. (\ref{th13}). Therefore, we partition each quadrant into two areas, as shown in quadrant 2 of Fig. \ref{13q}. 
In Area I, the orientation of distance vector is referred to as $y^{(+)}$-axis, while in Area II, it is referred to as $x^{(-)}$-axis. 
Then, we can extract the associated equations from the Law of Sines separately.
In Area I with $|x|\leq |y|$, we have $\frac{1}{-x}\sin{\theta} =\frac{1}{y} \sin(\frac{2\pi}{3}-\theta)$.  
Therefore, 
\begin{equation}
0 < \theta = \tan^{-1}\big({\frac{-\sqrt{3}\,y}{2y+x}} \big)< \frac{\pi}{3}
\end{equation}
In Area II, $|x|\geq |y|$, we have $\frac{1}{-y}\sin{\theta} = \frac{1}{x}\sin(\frac{2\pi}{3}-\theta)$. Then,
\begin{equation}
0 < \theta = \tan^{-1}\big({\frac{-\sqrt{3}\,y}{2x+y}}\big) < \frac{\pi}{3}
\end{equation}
With distance orientation information, we are able to calculate the least number of intermediate ANs required to connect two clusters. 
That is to get the longest covering range of one AN along the direction of the distance vector and then divide the distance by the range.
The following lemma gives the possible longest distance covered by an AN with a communication range $R = (12n + 7)r$.

\begin{lemma}
\label{integer}
The longest possible distance covered by an AN is an odd integer between $(12n + 7)$ and $(16n + 9)$ corresponding to direction relative to its neighboring AN.
\end{lemma}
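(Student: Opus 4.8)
The plan is to work inside a single $\pi/6$ sector and to read the covered distance straight off the distance measure (\ref{dis}). First I would fix the AN hexagon to be concentric with the origin cell and to carry the same orientation as the cells; under the stated choice $R=(12n+7)r$, where $12n+7\equiv 1\pmod 3$, each of its six corners lands exactly on a cell vertex, so the hexagon is \emph{registered} to the grid. I read ``distance covered'' toward a neighbouring AN as the value of (\ref{dis}) between the two AN centres when their hexagons share a common edge. Since both the hexagon and the metric (\ref{dis}) are invariant under the sixfold symmetry of the lattice, it suffices to let the relative direction sweep one sector, from the edge-normal ($\theta=0$ in (\ref{th13})) to the vertex direction ($\theta=\pi/6$).

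The core is a one-parameter computation. I would fix one edge of the reference hexagon, let the neighbour share it, and slide the neighbour along that edge, parametrising the cell-centred positions by an integer offset $k$. Writing the neighbour's centre in HCS coordinates as $(12n+7-k,\,2k)$ and substituting into (\ref{dis}), the sum entry $|(12n+7-k)+2k|=(12n+7)+k$ dominates the other two throughout the sector, so the covered distance equals $(12n+7)+k$. The robustness requirement that a genuine common edge survive caps the slide at one edge length, i.e.\ the overlap $(12n+7-3k)r$ must be positive, giving $k\le 4n+2$. Evaluating at the two ends yields $(12n+7)$ in the edge-normal direction and $(16n+9)$ at the maximal offset, which are exactly the asserted lower and upper bounds.

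Integrality is then immediate, since every AN centre has integer HCS coordinates and (\ref{dis}) returns an integer. For the odd-integer claim I would invoke registration a second time: the shared edge must align its endpoints with the cell-vertex lattice, and because the honeycomb of cell vertices splits into two sublattices, matching the corners of the two hexagons forces the offset to preserve that sublattice, i.e.\ $k$ must be even. As $(12n+7)$ is odd, $(12n+7)+k$ is odd for every admissible $k\in\{0,2,\dots,4n+2\}$, and these values sweep precisely the odd integers from $(12n+7)$ up to $(16n+9)$.

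I expect two genuinely delicate points. The first is verifying that the sum entry of (\ref{dis}) dominates across the \emph{whole} sector, so that the clean formula $(12n+7)+k$ holds with no case split. The second, and the main obstacle, is the parity argument: showing that grid registration of the shared edge excludes the even covered distances. This is where the arithmetic choice $12n+7\equiv 1\pmod 3$ does the real work, pinning the hexagon's corners to a single cell-vertex sublattice, and I anticipate that establishing it rigorously will be the crux of the proof.
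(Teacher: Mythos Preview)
Your treatment of the two extremal distances is sound and, in a sense, more systematic than the paper's: the paper simply reads off the minimum and maximum from the figure, computing
\[
|(A,B)|=2\cdot\frac{R-1}{2r}+1=12n+7,\qquad |(A,C)|=2\cdot\Bigl(2\cdot\frac{R-1}{3r}\Bigr)+1=16n+9,
\]
and then asserts that any edge-sharing configuration lies between these. Your parametrisation $(12n+7-k,2k)$ together with the observation that the sum entry of (\ref{dis}) dominates on the whole sector gives the same endpoints and, additionally, the exact list of achievable distances $(12n+7)+k$. That is a genuine gain over the paper's figure-based argument, and your check that the sum entry dominates is easy: for $0\le k\le 4n+2$ one has $12n+7+k\ge 12n+7-k$ and $12n+7+k>2k$, so no case split is needed.

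The real difficulty is exactly where you flagged it, and your proposed resolution does not go through. Your parametrisation already exhausts \emph{all} cell-centred neighbours whose left edge lies on the reference hexagon's right edge: the constraint $2a+b=2(12n+7)$ forces $b$ even, and with $b=2k$ every integer $k$ produces a legitimate cell centre $(12n+7-k,2k)$ with a positive edge overlap of $(12n+7-3k)r$. Nothing in ``both hexagons are registered to the grid'' adds a further restriction on $k$: since the large hexagon is a lattice translate of the reference for every integer $k$, its corners sit on cell vertices regardless of the parity of $k$. The sublattice heuristic breaks down because the two endpoints of the shared segment are a vertex of the reference and a vertex of the neighbour, not two vertices of the same hexagon, and there is no reason they must lie in the same honeycomb sublattice. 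In short, your own computation yields every integer from $12n+7$ to $16n+9$, which is in tension with the odd-only conclusion you are trying to reach.

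The paper itself does not prove the parity claim either; it ends with a one-line assertion that ``if two ANs have one edge in common, the distance between them is always an odd number.'' So you are not missing an argument that the paper supplies. If you wish to rescue the odd-integer statement, you will need an additional hypothesis on what ``common edge'' means beyond positive overlap (for instance, a requirement tied to the small-cell edge structure along the shared boundary), because under the bare overlap criterion your analysis shows the even distances are also realised.
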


\begin{proof}
Fig. \ref{cov} shows two extreme cases. Segment $|(A, B)|$ is the shortest possible distance covered by adding one AN, $B$, that is connected to $A$. The direction of vector $(A, B)$ is perpendicular to the common edge. On the other hand, segment $|(A,C)|$ is the longest possible distance covered by adding one AN. Here, we calculate them separately. Recalling that the edge length of a large hexagon in Fig. \ref{cov} is $R$, that of a small hexagon cell is $r$, and $R = (12n + 7)r$, we have
\begin{equation*}
|(A, B)| = 2 \times |(A, D)| + 1 = 2\times \dfrac{R-1}{2r} + 1 = 12n + 7
\end{equation*}
\begin{equation*}
|(A, C)| = 2 \times |(C, E)| + 1 = 2 \times (2\times\dfrac{R-1}{3r}) + 1 = 16n + 9
\end{equation*}
As long as two ANs are connected and have one common edge, the distance between them in HCS is larger than the minimum case $|(A, B)|$ and less than the maximum case $|(A, C)|$. Last but not least, if two ANs have one edge in common, the distance between them is always an odd number. 
\end{proof}

\section{NP-hard Problem Statement and Exhaustive Search Algorithm}
In this section,  we prove that our node placement problem in HCS is NP-hard  
by showing that it is a reduction from Knapsack problem which is known to be NP-complete.
Then, we provide an exhaustive search algorithm to solve the problem as a comparison benchmark. 

\subsection{NP-Hard Problem Statement}

\textbf{Problem 1} (Knapsack Problem \cite{handbook}) 
Given a set of items $E = \{e_1,\cdots,e_t\}$
each with a weight $w_i$ and a profit $v_i$ where $i \in \{1,\cdots,t\}$,  
is there a way of choosing $x_i$ units of each item $e_i$ to fill the knapsack 
such that the profit of the items chosen $\sum_{i=1}^t x_i v_i$
is at least $V$ while the total weight of the items chosen 
$\sum_{i=1}^t x_i w_i$ is not exceeding $W$?\\

\noindent \textbf{Problem 2} (Node placement problem in HCS) Given $G$ pre-deployed gateway nodes with integer coordinates in an HCS and a minimum spanning tree of length $L$ formed by these nodes, can one cover the total distance of $L$ by $N$ additional intermediate nodes?

In Problem 2, covering length $L$ with $(N+G)$ ANs is equivalent to being able to find 
a connected path between any arbitrary pair of nodes where every pair of neighboring 
nodes have distances in the range $[(12n+7)r,(16n+9)r]$.

\begin{theorem}
\label{redc}
There is a polynomial time reduction from Problem 1 to Problem 2.   
\end{theorem}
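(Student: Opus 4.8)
The plan is to give a polynomial-time map carrying an arbitrary Knapsack instance $(E,\{w_i\},\{v_i\},V,W)$ to a Node-placement instance --- a set of gateway nodes at integer HCS coordinates, the length $L$ of their induced minimum spanning tree, and a budget $N$ of intermediate ANs --- so that the Knapsack instance is a YES-instance exactly when the placement instance is. I would set the node budget to play the role of the capacity, $N=W$, and engineer the geometry so that the required covered length plays the role of the profit threshold $V$. The leverage for the construction comes entirely from Lemma \ref{integer}: a single AN bridges an \emph{odd integer} distance lying in $[(12n+7),(16n+9)]$, the exact value being dictated by the direction to its neighbor. Thus, by choosing directions I can pin the per-hop length of a chain of gateways to any admissible odd value, and by choosing chain lengths I can pin how many ANs a chain consumes.

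Concretely, for each item $e_i$ I would build a gadget: a branch of gateway nodes laid along a direction $\theta_i$ whose per-hop coverage equals a prescribed odd value from Lemma \ref{integer}, with the branch length tuned so that spanning it once costs precisely $w_i$ intermediate ANs while contributing precisely $v_i$ to the spanning-tree length that must be covered. Allowing a gadget to be spanned $x_i$ times encodes the ``$x_i$ units'' of item $e_i$. I would graft all gadgets onto a common backbone so the gateways form one connected graph, read off the total MST length $L$ from the construction, and set the target so that reaching the covered-length threshold is equivalent to accumulating profit at least $V$.

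For correctness I would argue both directions. Given a feasible Knapsack selection $\{x_i\}$ with $\sum_i x_i w_i\le W$ and $\sum_i x_i v_i\ge V$, placing ANs to span gadget $i$ exactly $x_i$ times spends $\sum_i x_i w_i\le N$ ANs and covers $\sum_i x_i v_i\ge V$ of the tree; every hop obeys Lemma \ref{integer}, so the placement is legal. Conversely, from any legal placement meeting the length target within budget I would recover the multiplicities $x_i$ by counting the ANs committed to each gadget, and then verify the two Knapsack inequalities directly.

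The crux --- and the step I expect to fight hardest --- is the faithful arithmetic encoding together with a rigidity guarantee. Because Lemma \ref{integer} confines every hop to the narrow band $[(12n+7),(16n+9)]$ (all admissible values lie within a factor $\tfrac{16n+9}{12n+7}$ of one another), I cannot equate a single hop with an arbitrary $w_i$ or $v_i$; each weight and profit must instead be realized as an aggregate over a controlled number of hops, while keeping the gateway coordinates and hence the whole construction polynomial in $t$, $\log W$, and $\log V$. I must then show the construction admits no ``cheating'' placement: the discreteness and parity forced by Lemma \ref{integer}, together with verifying that the constructed gateways really have MST length $L$ (so that no alternative spanning tree shortcuts the intended per-gadget decomposition), must rule out shifting coverage across gadget boundaries to beat the honest accounting. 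Establishing this rigidity is the substance of the proof; the remaining checks that the map runs in polynomial time are routine.
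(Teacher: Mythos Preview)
Your plan and the paper's argument diverge sharply. The paper does not build gadgets or lay out any geometry at all: it simply observes that, by Lemma~\ref{integer}, every intermediate AN contributes one ``item'' of weight $1$ and profit equal to some odd hop length in $D=\{12n+7,\dots,16n+9\}$, and then \emph{identifies} Problem~2 with the special Knapsack instance having item set $D$, unit weights $w_i=1$, profits $v_i=|d_i|$, threshold $V=L$, and capacity $W=N$. There is no encoding of arbitrary $(w_i,v_i)$, no rigidity argument, and no construction of gateway coordinates; the ``reduction'' is a parameter match. What this buys is brevity; what it costs is that it is, at best, a reduction from the restricted unit-weight Knapsack with profits in $D$, not from Problem~1 as stated. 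Your gadget-based plan is the honest route to the theorem as written.

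That said, your plan has a genuine gap at the point where you write ``allowing a gadget to be spanned $x_i$ times encodes the $x_i$ units of item $e_i$.'' Problem~2 is a connectivity question: the $G$ gateways are fixed, and the $N$ intermediate ANs must render the whole set connected with every adjacent pair at admissible hop distance. There is no mechanism in Problem~2 for traversing a branch more than once, nor for leaving a branch un-spanned --- every gateway must end up connected. So the unbounded multiplicities $x_i$ of Problem~1 have no geometric counterpart, and your per-gadget accounting ``$x_i$ copies cost $x_iw_i$ ANs and contribute $x_iv_i$ length'' does not correspond to any legal placement. To salvage the approach you would need either to reduce from a Knapsack variant whose choices are naturally binary \emph{and} to engineer gadgets that can be legitimately skipped (which conflicts with mandatory connectivity), or to encode multiplicity by replicating gadgets in the instance --- but then the number of gadgets depends on the unknown $x_i$, which is circular. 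This is the step that fails, before you ever reach the rigidity argument you flagged.
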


\begin{proof}
Suppose set $D$ has cardinality $\mu$ and 
contains all odd integers between $(12n + 7)$ and $(16n + 9)$, i.e., 
$D = \{12n + 7, 12n + 9, \dots, 16n + 9\}$. 
We start with an instance $I$ of Problem 1 with which set $E$ with cardinality $\mu$ is associated. Then, we construct an instance $I'$ of Problem 2 with which set $D$ 
also with cardinality $\mu$ is associated.

According to Lemma \ref{integer}, each intermediate AN, based on the orientation of distance vector to its neighbor, covers a distance type $\underline{d_i}$ where $|\underline{d_i}| \in D$. 
These $\underline{d_i}$'s are items to be packed in instance $I'$. 
Let $y_i$ denote the number of ANs of type $\underline{d_i}$.
Then, the total distance $l$ covered by all intermediate ANs is expressed as
\begin{equation}
l = \sum_i y_i|\underline{d_i}|
\end{equation}
Let profit $V$ in instance $I$ be equal to $L$. 
Assuming the weight of each AN is $1$, i.e., $w_i = 1$, the total weight of all intermediate ANs amounts to the number of intermediate ANs, i.e., 
\begin{equation}
\sum_i y_i  = N
\end{equation}
Considering the statement above and assuming $W = N$, 
the process of constructing $I'$ from $I$ occurs in polynomial time.

In Problem 2, we are seeking a yes/no answer to the question  ``Can
we, by using $N$ intermediate ANs, cover a total distance of $l \geq L$?''
If the answer to Problem 1 is yes,
we can fill the knapsack such that a minimum profit of 
$V$ is reached without exceeding a maximum weight of $W$. 

Through the reduction above, it is feasible to cover a length of at least $L$ by
placing at most $N$ additional nodes. In addition, if the answer to Problem 2 is
no, which is a special instance $I$ of Problem 1 with $V = L$, $w_i = 1$, 
and $W = N$, Problem 1 will also have no answer. This implies a
polynomial reduction from Problem 1 to Problem 2.
\end{proof} 

Therefore, we conclude that Problem 2 is NP-hard. In the next subsection, we provide an exhaustive search algorithm to solve Problem 2.

\subsection{Exhaustive Search Algorithm}

Our exhaustive search algorithm uses a number of intermediate ANs and tries to rearrange their locations so as to establish global connectivity, until the smallest number of ANs that connect the entire graph is identified. We use $\kappa$ to denote the number of intermediate ANs in exhaustive search.
There is a finite set of feasible locations representing the candidate coordinates of intermediate AN locations in HCS. Generally speaking, all coordinate points of HCS except those occupied by pre-deployed clusters are feasible. The number of feasible locations $M$ is then derived as 
\begin{equation}
M = \lceil \dfrac{\Omega}{1.5\sqrt{3}\,r^2}\rceil - G 
\end{equation}
where $1.5\sqrt{3}\,r^2$ represents the area of a hexagonal cell of edge length $r$, 
$\Omega$ is the field area, and $G$ is the number of pre-deployed clusters.
Those $M$ feasible locations are then stored in an $M\times 2$ matrix $\mathbf{F}$. 

In our exhaustive search algorithm, we test all ${M \choose \kappa}$ possible combinations of $M$ coordinates in $\mathbf{F}$ and check if there is one configuration that accomplishes connectivity of all clusters. If not, we increase $\kappa$ by $1$ and repeat the same process until the least number of intermediate ANs rendering global connectivity is reached. The algorithmic pseudo code is given in Algorithm \ref{alg2}.

One may notice the considerable computational complexity of the nested 'for' loop. Given $M$ feasible AN locations, when the optimal solution is reached, say connecting the entire graph 
with $\kappa$ ANs, then the runtime of exhaustive algorithm 
is at least in the order of 
$\mathcal{O}(\sum_{i=1}^{\kappa - 1} {M \choose i})$. 
It can be shown, by Stirling's formula and binomial theorem, that the runtime is bounded 
as shown below. 
\begin{equation}
\dfrac{(M+1)^{\kappa-1}}{(\kappa-1)!}< \mathcal{O}(\sum_{i=1}^{\kappa - 1} {M \choose i}) < 2^M
\end{equation}

In practice, we are able to strategically preclude some locations that have very low possibility of accommodating an AN. For instance, an AN may not be placed too close to a pre-deployed cluster, and all ANs typically are, but not always, located inside the convex hull containing all pre-deployed clusters. With this strategy, we can reduce the size of $\mathbf{F}$ to some extent. However, to the best of our knowledge, there is no systematic strategy of reducing the number of feasible locations.   
\begin{algorithm}
	\caption{Exhaustive Search Algorithm}
	\label{alg2}
	\begin{algorithmic}[5]
		\STATE \textbf{Input:} \text{Location of pre-deployed clusters}
		\STATE \textbf{Output:} \text{Coordinates of intermediate ANs} 
		\STATE \text{	}
		\STATE \text{Establish finite HCS and} 
		\STATE \text{Put coordinates of feasible AN positions in $\mathbf{F}_{M\times2}$}
		\STATE \text{Set $\kappa = 0$} 	
		\WHILE{(Graph $\mathcal{G}$ not fully connected)}
		\STATE \text{$\kappa$ $+=$ $1$}
		\FORALL{${M \choose \kappa}$ possible combos of feasible positions} 
		\STATE \text{Place $\kappa$ ANs at these positions}
		\IF{$\mathcal{G}$ is connected}
		\STATE \text{Set coordinates of $\kappa$ ANs from $\mathbf{F}_{M\times 2}$ }
		\STATE \textbf{break}
		\ENDIF 
		\ENDFOR 
		\ENDWHILE
	\end{algorithmic}
\end{algorithm}

\section{Heuristic Connectivity Algorithm}
\label{caa}

As the node placement problem described in the previous section is NP-hard, 
it is realistic to find a heuristic near-optimal solution offering a reasonable time complexity.
Hence, this section provides a description of our heuristic connectivity algorithm 
and its complexity analysis. 

As illustrated in section II, 
we model the communication range of an AN by a hexagonal cell 
with an edge length of $(12n+7)r$ where $n$ is a positive natural number. 
A robust connectivity criterion is defined as when two large hexagons have a common edge. 
Consequently, we are dealing with the task of connecting a number of hexagons within 
the network graph by optimally placing a number of hexagons of the same size between each pair as needed. 
As this task can be abstracted as a distance optimization problem within 
the HCS, we introduce a class of geometric distance 
optimization (GDO) algorithms.

The main algorithm of interest in this paper 
is referred to as enhanced geometric distance 
optimization (EGDO) algorithm.
The name stems from the fact that the algorithm is an enhanced version of 
a pair of GDO algorithms proposed in \cite{GDO}. 
Referred to as LongestHCS and ShortestHCS, our work of \cite{GDO} shows 
that the LongestHCS algorithm outperforms ShortestHCS algorithm in most scenarios.
The main improvement of EGDO algorithm over GDO (LongestHCS) algorithm 
is its significantly improved time complexity. As described in Subsection \ref{egdoAlg}, 
the latter is achieved by locally modifying the existing MST 
in iterative steps as opposed to forming a new MST after each iteration as it was the case of GDO algorithms. 
In the rest of this paper, we refer to the LongestHCS algorithm as the GDO algorithm. 

\subsection{EGDO Algorithm}
\label{egdoAlg}
Given a number of clusters distributed in the plane, the first step is to set up the HCS origin and axes. We set the origin of the 
HCS at the center of geometry of these clusters. Then, we set up $x$- and $y$-axis for HCS at the origin, i.e., $O$. After setting up 
the HCS, the coordinates of all clusters in HCS are specified. Utilizing the HCS and the associated algebra developed in 
Section \ref{hcsSec}, the 5-step iterative algorithm \ref{alg1} shown below 
leads us to establishing network graph connectivity 
using the minimum number of intermediate ANs. 
In what follows, the individual steps of algorithm \ref{alg1} are described in detail. 
\begin{algorithm}
	\caption{EGDO Algorithm}
	\label{alg1}
	\begin{algorithmic}[5]
		\STATE \textbf{Input:} \text{Location of pre-deployed clusters}
		\STATE \textbf{Output:} \text{Coordinates of intermediate ANs} 
		\STATE \text{   }
		\STATE \text{Step 1: Establish HCS, calculate MST, find terminals} 
		\WHILE{(Graph $\mathcal{G}$ not fully connected)}
		\STATE \text{Step 2: Identify clusters $P$ and $Q$ to be connected}
		\STATE \text{Step 3: Connect $P$ and $Q$ using minimum no of ANs} 
		\STATE \text{Step 4: Modify MST using ANs placed in Step 3} 
		\STATE \text{Step 5: Break if graph is fully connected}
		\ENDWHILE
	\end{algorithmic}
\end{algorithm}
\subsection*{Step 1: Calculate MST and Find Terminals}
In this Step, we first initialize the iteration counter $k$ to $1$. 
Then, we calculate the initial distance weighted MST formed by 
a given set of clusters. The distance between a pair of clusters is calculated 
based on the distance measure definition in Section \ref{hcsSec}. The MST is
calculated using Kruskal algorithm \cite{kruskal}. 
The calculated MST is presented by an $N \times 2$ matrix $\mathbf{M}^{(k)}$ 
in which $k$ represents the iteration number, each row identifies 
the two vertices of an edge, and $N$ is the number of edges. 
Matrix $\mathbf{M}^{(k)}$ includes the edges in an 
increasing order of edge length. Given $\mathbf{M}^{(k)}$, 
we find all terminal nodes. A terminal node is an AN in 
the network connected to only one AN. 
From these terminals, we establish a set of nodes of interest for use in the next step. 
Once we place a new intermediate AN, we compare the distances between this new AN and 
the nodes in which we are interested. 
We refer to the nodes in which we are interested 
as potentially adjustable nodes (PANs). 
If the distance between the new AN and a PAN is smaller than the edge 
length connected to that PAN, we change 
the route by deleting the edge connected to the PAN and connecting the PAN with the new AN.

\subsection*{Step 2: Identify the Pair of Connecting Clusters}
Since the rows of distance matrix are in increasing order, the last row of $\mathbf{M}^{(k)}$  
identifies the edge to be connected in the next step. Let us assume the elements of 
the last row are clusters $P$ and $Q$. 

One might be curious as to why we select the pair of nodes that have the longest distance in between. The answer lies in the fact that the longest distance pair yielded our best experimental from among the variants tested. Other variants include always selecting the shortest distance pair, alternating 
between shortest and longest distance pairs, and several types of clustering strategies discussed 
in \cite{Clustering}. 

\subsection*{Step 3: Connect the Pair of Selected Clusters}
This step attempts to achieve two goals. 
The primary goal is to connect the selected pair of clusters $P$ and $Q$ 
(representing either gateway ANs or intermediate ANs) with the least number of intermediate ANs. 
The secondary goal is to deploy those intermediate ANs, which we denote as $\Upsilon$,
in a way to bring the remaining isolated clusters closer thereby helping their future connectivity. We refer to the set of intermediate ANs in iteration $k$ as $A^{(k)}$.
In order to achieve these goals, we utilize the following iterative process. 

\emph{Case 1}: When using a single AN 
suffices to connect $P$ and $Q$, $A^{(k)}=\{\Upsilon^{(k)}\}$ 
and $\Upsilon^{(k)}$ is placed  
between $P$ and $Q$. The exact position of $\Upsilon^{(k)}$ 
is calculated by solving the optimization problem given in this section. 

In order to identify the coordinates $(x,y)$ of the new node $\Upsilon^{(k)}$ 
placed in the $k$-th iteration of Case 1, we introduce a pair of conditions.
\begin{enumerate}
\item Maintain the connectivity of all three ANs, namely, $P$, $Q$, 
and the new node $\Upsilon^{(k)}$.
\item Identify the position of node $\Upsilon^{(k)}$ by maximizing the probability of connecting 
the newly aggregated cluster to other pre-deployed clusters and minimizing the overlap 
area between $\Upsilon^{(k)}$ and the other two nodes.
\end{enumerate}

In short, we want to connect the pair of selected clusters by placing $\Upsilon^{(k)}$, 
as well as expect to facilitate the connectivity of remaining clusters by intelligently placing $\Upsilon^{(k)}$ when possible. Accordingly, we formulate the following distance maximization problem graphically depicted in Fig. \ref{optim}.

\begin{eqnarray}
 \underset{x,y}{\max} & |(P,\Upsilon^{(k)})| + |(Q,\Upsilon^{(k)})|  & \label{ob1} \\
\text{S.T.} &  (P,R) \cdot (R,\Upsilon^{(k)})=0& \label{const1} \\
&  (Q,S) \cdot (S,\Upsilon^{(k)})=0 &   \label{const2} \\ 
& |({P,R})|\leq \lambda  \quad \; \;& \label{const3} \\
& |({Q,S})| \leq \lambda \quad \; \;& \label{const4} \\
& |({R,\Upsilon^{(k)}})| \leq  8n+4  & \label{const5} \\
& |({S,\Upsilon^{(k)}})| \leq  8n+4 & \label{const6}
\end{eqnarray}
In this problem, $\lambda=12n+7$ and $(P,R), (Q,S) \in \Psi$.
\begin{equation}
\Psi=\{ (\lambda,0), (-\lambda,0), (0, \lambda), (0, -\lambda), (\lambda, -\lambda),(-\lambda, \lambda)\}
\end{equation} 
By definition of $\Psi$, $\Upsilon^{(k)}$ is connected to $P$ (or $Q$) 
and has the least overlap area with $P$ (or $Q$) as long as $\Upsilon^{(k)}$  assumes its value  
from the set of feasible positions 
determined by the inner product constraint (\ref{const1}) (or (\ref{const2})). 
In other words, the distance between $\Upsilon^{(k)}$ 
and $P$ (or $Q$) is maximized along the direction of vector $(P,R)$ 
(or $(Q,S)$).
\begin{figure}[!ht]
\centering
	\includegraphics[width=0.4\textwidth]{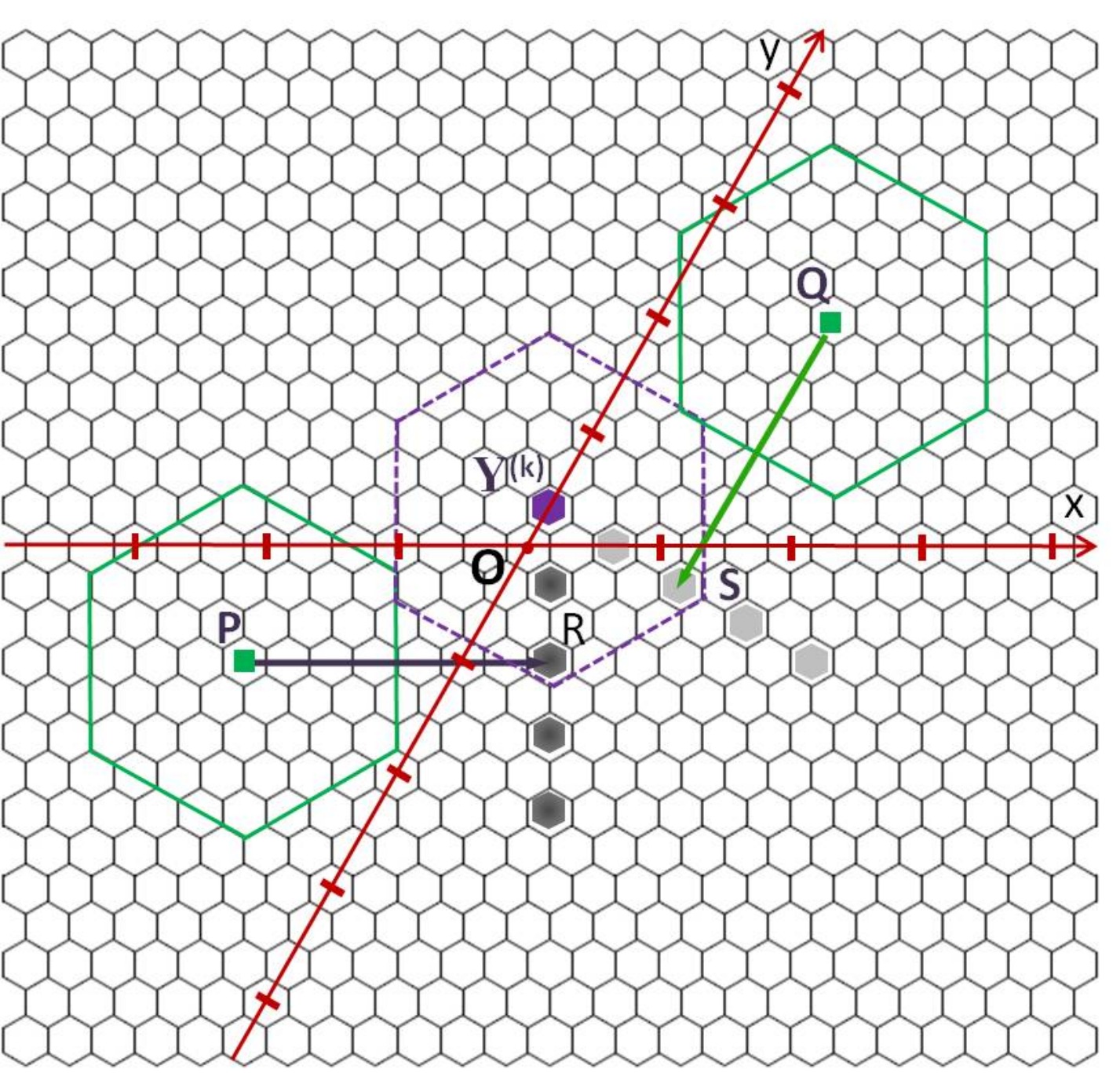}
	\caption{A graphical representation of the feasible points of $\Upsilon^{(k)}$ in Case 1.}
	\label{optim}
\end{figure}

In what follows, we explain the meaning of these constraints. 
As shown in Fig. \ref{optim}, the cells in solid dark and light grey color represent all feasible 
positions of $\Upsilon^{(k)}$ assuring connectivity to nodes $P$ and $Q$. These two inner product constraints maintain $\Upsilon^{(k)}$ slides along 
the line perpendicular to $(P,R)$ at point $R$ and $(Q,S)$ at point $S$ as shown by gray cells in Fig. \ref{optim}, while $\vert(P,R)\vert$ identifies the furthest position $\Upsilon^{(k)}$ can reach along the direction(s) of $(P,R)$ while staying connected to $P$. 
In short, the inner product constraint identifies the track of movement for $\Upsilon^{(k)}$ and $|(R,\Upsilon^{(k)})|$ controls the 
range on the track. Although $(P,R)$ has six possible directions, we do not need to inspect them all. Based on the relative position of $Q$ with respect to $P$, only one facade of each node needs to be considered. The same explanation also applies to $Q$. 
To solve the optimization problem, we first ignore the inequality constraints and calculate the position of $\Upsilon^{(k)}$ by the two equality constraints, we verify the connectivity of the two gateway nodes. 
With different pairs of $(P,R)$ and $(Q,S)$ 
selected in set $\Psi$, there will be two solutions associated with the coordinates of $\Upsilon^{(k)}$ satisfying inequality constraints. 
The one closer to the origin is chosen in order to improve the probability of connecting to other clusters.
Now, let us assume we solve Case 1 of Step 3 leading to specifying  
the coordinates of intermediate node $\Upsilon^{(k)}$ as $(x, y)$, $P$ as $(p_1, p_2)$, 
and $Q$ as $(q_1, q_2)$. Now select $(P,R)$ to be $(\lambda, 0)$ 
and $(Q,S)$ to be $(0, -\lambda)$. Then, solving the equality constraint (\ref{const1})  yields
\begin{equation}
	x + \dfrac{1}{2}y \; = \; p_1 + \lambda + \dfrac{1}{2}p_2 %
\label{optmp}
\end{equation}
Similarly, solving the equality constraint (\ref{const2})  yields
\begin{equation}
	\dfrac{1}{2}x + y \; = \; \dfrac{1}{2}q_1 + q_2 - \lambda	
\label{optmp2}
\end{equation}
By solving Eq. (\ref{optmp}) and Eq.  (\ref{optmp2}), the coordinates of $\Upsilon^{(k)}$ are identified which  are required to be 
integers within the HCS.
More importantly, the position of $\Upsilon^{(k)}$ must satisfy the inequality constraints (\ref{const3}), (\ref{const4}), 
(\ref{const5}), and (\ref{const6}). 
In this case, all possible combinations of $(P,R)$ and $(Q,S)$ in set $\Psi$ are to be inspected. 
Note that $(P,R)$ and $(Q,S)$ cannot be parallel to each other, otherwise Eq. (\ref{optmp}) and  Eq. (\ref{optmp2})  have no joint solution. 
This rule only applies to Case 1. \\

\emph{Case 2}: When using only one AN cannot establish connectivity between $P$ and $Q$, 
the following iterative process is initiated. 
In this case, we assume $A^{(k)} = \{\Upsilon^{(k)}_1, \cdots, \Upsilon^{(k)}_{\nu^{(k)} }\}$, i.e., 
$\nu^{(k)}$ ANs are required to connect $P$ and $Q$ where $\nu^{(k)} \geq 2$. 
\begin{enumerate}
\item Place an AN next to each of the two clusters $P$ and $Q$. 
Referred to as $\Upsilon^{(k)}_i$ and $\Upsilon^{(k)}_{i+1}$, $1 \leq i < \nu^{(k)}$,  
these two ANs are connected to  their associated clusters, 
$P$ and $Q$, respectively. 
\item Find the exact positions of $\Upsilon^{(k)}_i$ and $\Upsilon^{(k)}_{i+1}$ by minimizing the sum of three distances between $\Upsilon^{(k)}_i$ and $\Upsilon^{(k)}_{i+1}$, 
$\Upsilon^{(k)}_i$ and the origin, $\Upsilon^{(k)}_{i+1}$ and the origin.
\item Inspect connectivity between $\Upsilon^{(k)}_i$ and $\Upsilon^{(k)}_{i+1}$. If connected, terminate the process. If more ANs are needed, 
replace the two end nodes with $\Upsilon^{(k)}_i$ and $\Upsilon^{(k)}_{i+1}$ and then 
solve the problem of connecting them by going through the same process described before.
\end{enumerate}
\begin{figure}[!ht]
	\centering
	\includegraphics[width=0.45\textwidth]{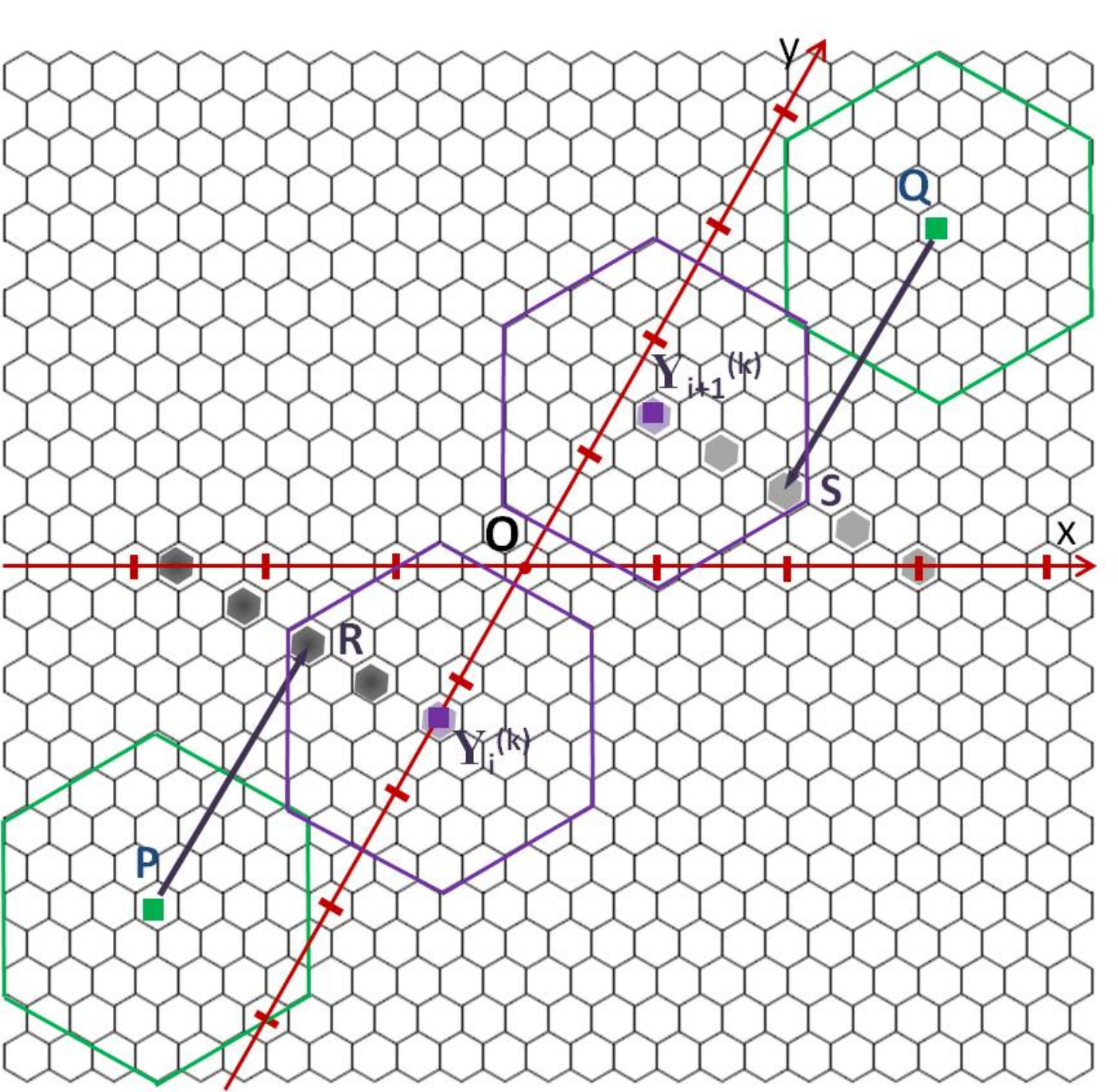}
	\caption{A graphical representation of the feasible points of $\Upsilon_i^{(k)}$ and $\Upsilon_{i+1}^{(k)}$ in Case 2.
	}
	\label{optim2}
\end{figure}
We assume the AN pair $P$ and $Q$ 
are being connected using ANs $\Upsilon_i^{(k)}$ and $\Upsilon_{i+1}^{(k)}$ with coordinates  
$(x_i,y_i)$ and $(x_{i+1},y_{i+1})$, respectively.
We use two sets of constraints to find feasible positions of $\Upsilon_i^{(k)}$ and $\Upsilon_{i+1}^{(k)}$ corresponding to $P$ and $Q$, respectively. 
Then, different combinations of these feasible positions of $\Upsilon_i^{(k)}$ and $\Upsilon_{i+1}^{(k)}$ 
are inspected in order to find the one minimizing the distance between $\Upsilon_i^{(k)}$ 
and $\Upsilon_{i+1}^{(k)}$, $\Upsilon_i^{(k)}$ and the origin, as well as 
$\Upsilon_{i+1}^{(k)}$ 
and the origin.
As shown in Fig. \ref{optim2}, dark gray cells on the line segment 
perpendicular to $(P,R)$ represent all feasible positions of $\Upsilon^{(k)}_i$.
Similarly, light gray cells on the line segment 
perpendicular to $(Q,S)$ represent all feasible positions of $\Upsilon^{(k)}_{i+1}$.
The distance optimization problem is accordingly described as below. 

\begin{eqnarray}
\underset{x_i,y_i,x_{i+1},y_{i+1}}{\min} & |(\Upsilon_{i}^{(k)},\Upsilon_{i+1}^{(k)})| + |(O,\Upsilon_{i}^{(k)}| + |(O,\Upsilon_{i+1}^{(k)})|  & \label{ob2} \\
\text{S.T.} & (P,R) \cdot (R,\Upsilon_i^{(k)})=0 & \label{coneq1} \\
& (Q,S) \cdot (S,\Upsilon_{i+1}^{(k)})=0 & \label{coneq2} \\
& |(P,R)|\leq \lambda\\
& |(Q,S)|\leq \lambda\\
& |(R,\Upsilon_i^{(k)})| \leq  8n + 4 & \label{conin1}\\
& |(S,\Upsilon_{i+1}^{(k)})| \leq  8n + 4  & \label{conin2}
\end{eqnarray}
The set of constraints used to find feasible positions of $\Upsilon_i^{(k)}$ are expressed by  
Eq. (\ref{coneq1}) and Eq. (\ref{conin1}), while the one used to find positions of 
$\Upsilon_{i+1}^{(k)}$ are expressed by Eq. (\ref{coneq2})  and Eq. (\ref{conin2}).
Here,  $(P,R)$ and $(Q,S)$ 
are in $\Psi$. 

However, $(P,R)$ 
only has one choice in $\Psi$ because the vector $(P,R)$ 
can only point to the facade that faces 
$Q$. 
Similarly, $(Q,S)$ only has one choice in $\Psi$ because the vector 
$(Q,S)$ can only point to the facade that faces $P$. 
After finding all feasible positions for 
$\Upsilon_i^{(k)}$ and $\Upsilon_{i+1}^{(k)}$, we have to conduct an exhaustive 
search among different combinations of the feasible positions to find the unique 
combination that minimizes the objective function. 

\subsection*{Step 4: Modify MST}
In this step, first the current PAN set is identified following the placement of 
one or more intermediate ANs in the previous step. In Case 1 of Step 3, 
this step takes place after the single AN is placed. In Case 2 of Step 3, this step is initiated 
right after $\Upsilon^{(k)}_i$ and $\Upsilon^{(k)}_{i+1}$ are placed. 
It is observed that placing one or two new ANs
can only introduce local changes to the topology of the network, i.e., 
topology changes are limited to the neighboring nodes of newly placed ANs. 

In a given MST, a $2$-connected node or terminal
may be connected to a newly placed ANs guaranteed not to create a loop. 
However, connecting  
a node with $3$ or more edges to a PAN may result in 
creating a loop. 
In order to avoid the 
possibility of creating a loop, only $2$-connected nodes and terminals in an MST are considered as PANs.

Accordingly, we propose a {\it line graph} method in order to identify these PANs. 
A line graph starts from a terminal node. This terminal is the first node. Then 
following the line, the second node is reached and so on. 
The line ends when a $3$-connected node, a terminal node, or a currently selected node is reached. 
If a node on the line does not belong to any of the categories above, then 
it is a $2$-connected node and is subsequently added to the current PAN set. 
Note that we are only interested in terminals or $2$-connected nodes 
because  
$3$-connected nodes cannot be modified or else the entire spanning tree will become disconnected. 
The nodes on the line represent the current PAN set which is one subset of all PANs.

Having identified the current PAN set, the MST can be modified accordingly. 
In order to always keep the last row of $\mathbf{M}^{(k)}$ 
as the edge to be connected, we modify the MST according to the cases of {\it Step 3} 
above.

If we are to follow Case 1 and place a single AN, we remove the last row $[P \; Q]$ of 
$\mathbf{M}^{(k)}$ 
and insert two rows $[P \; \Upsilon^{(k)}]$ and $[Q \; \Upsilon^{(k)}]$ as the top rows of $\mathbf{M}^{(k)}$. 
This changes the representation of MST from  $\mathbf{M}^{(k)}$ in this iteration to $\mathbf{M}^{(k+1)}$ 
in the next iteration.

If we are to follow Case 2 when adding a pair of ANs 
$\Upsilon^{(k)}_i$ and $\Upsilon^{(k)}_{i+1}$, we 
replace the last row of $\mathbf{M}^{(k)}$ with $[\Upsilon^{(k)}_i \; \Upsilon^{(k)}_{i+1}]$
and then insert $[\Upsilon^{(k)}_i \; P]$ and $[\Upsilon^{(k)}_{i+1} \; Q]$ 
as the top rows of 
$\mathbf{M}^{(k)}$ for iteration $(k+1)$. 
 
Second, for each newly added AN  
in set $A^{(k)}$,  
say $\Upsilon^{(k)}_i$, 
we compare the distance between $\Upsilon^{(k)}_i$ and 
the $j$-th node on a line graph, $\delta^{(k)}_{ij}$, 
with the distance between $j$-th and $(j+1)$-th node on the line. 
If $\delta^{(k)}_{ij}$ is smaller, we modify the tree by deleting the edge between $j$-th and 
$(j+1)$-th node on this line graph, and then adding the edge between $\Upsilon^{(k)}_i$ and $j$-th 
node. After this modification, $\Upsilon^{(k)}_i$ becomes a $3$-connected node, as $\Upsilon^{(k)}_i$ 
will be connected with $P$ and $Q$ or other intermediate ANs between $P$ and $Q$, 
as well as $j$-th node on this line graph. Therefore, the edges ending at $\Upsilon^{(k)}_i$ can 
never be modified. After making each modification, we stop searching for other PANs 
on the current or other line graphs.

In this process, we always compare $\delta^{(k)}_{ij}$ with the edge lengths of MST entries and insert the associated 
edges at the right place in order to preserve the ascending order of edge lengths in MST
matrix.  

\subsection*{Step 5: Check Stoppage Rule}
When the selected pair of clusters is found to have already been connected, the algorithm stops. 
Otherwise, we increment the iteration counter $k$ by $1$ and go back to \emph{Step 2}. \\ 

It is worth noting that the main difference between 
EGDO and GDO algorithm is the fact that EGDO algorithm adjusts 
the MST due to local topology changes associated 
with adding ANs as opposed to recalculating a new MST done by GDO. 
This leads to a significant reduction of average time complexity 
in EGDO algorithm as reported in Section \ref{secCompEGDO} and 
Section \ref{secRes}.

\subsection{Analysis of Complexity}
In this subsection, we analyze the computational complexity of EGDO in comparison with GDO algorithm. 
We first determine the time complexity of solving the optimization 
problem of Step 3 
as it is the common step shared by both GDO and EGDO algorithms, 
and then analyze the complexity of the recursive algorithm.

\subsubsection{Complexity of Solving the Optimization Problem}
The total number of cases 
that need to be inspected in order to solve the optimization problem 
of Step 3 of Section \ref{egdoAlg}
is equal to
$
{6 \choose 2} - 3 = 12 \nonumber
$.
As mentioned before, each $(P,R)$ and $(Q,S)$ has six possible directions but
cannot be in parallel or else there is no solution to Eq.(\ref{optmp}) and Eq.(\ref{optmp2}). 
However, this number can be reduced to $4$ 
based on the relative position of $P$ with respect to $Q$. 
We argue that solving the optimization problem in Case 1 of Step 3 takes constant time
as solving Eq.(\ref{optmp}) costs constant time. 

If we are to follow Case 2 in {\it Step 3}, we conduct an exhaustive search for combinations of 
feasible positions for $\Upsilon^{(k)}_i$ and $\Upsilon^{(k)}_{i+1}$. 
The total number of all individual feasible positions for $\Upsilon^{(k)}_i$ is 
\begin{equation}
\frac{2(R-r)}{3r} + 1 = \frac{2(\lambda-1)}{3} + 1 = 8n + 5
\end{equation}
The same number also represents the total number of all individual feasible positions for $\Upsilon^{(k)}_{i+1}$.
Thus, the total number of combinations that either GDO or EGDO algorithm need to inspect in Case 2 of {\it Step 3} 
is $(8n+5)^2$. 
This is an exhaustive search within a finite number of candidates since $n$ is determined by the communication range of an AN. 
Therefore, finding the particular combination of the pair  $(\Upsilon^{(k)}_i$, $\Upsilon^{(k)}_{i+1})$ 
that minimizes their distance takes constant time.  

Next, we note that Case 2 follows the same approach iteratively until $P$ and $Q$ are connected. 
This is because $\nu^{(k)}$, the total number of intermediate ANs required to connect $P$ and $Q$ in iteration $k$, 
is a finite number known at the beginning of each iteration and is decreasing in consequent iterations 
as the selected edge length within MST never increases. Hence, we conclude that
solving the optimization problem of Case 2 also takes constant time.
This constant is a function of ${R/r}$ as well as the number of ANs needed to connect the two selected nodes.

\subsubsection{Complexity of EGDO Algorithm}
\label{secCompEGDO}
In this subsection, we analyze the complexity of the other steps of the EGDO algorithm. 
In Step 1, the time complexity of calculating the distance weight matrix between all nodes 
is in the order of $\mathcal{O}(N_0^2)$ provided that 
there are $N_0$ pre-deployed clusters.
To calculate the minimum spanning tree takes $\mathcal{O}(E \log{N_0})$ where $E$ is the number of edges in the initial network graph. Since we need to inspect all edges in the weight matrix, $E$ is close to $N_0^2$. To find terminals, we need to inspect the degree of each 
node. 
Completing this process takes a time complexity of $\mathcal{O}(N_0)$.
Hence, the total complexity of Step 1 is in the order of 
$\mathcal{O}(N_0^2) + \mathcal{O}(N_0^2\log{N_0}) = \mathcal{O}(N_0^2\log{N_0})$. 
In Step 4 and in order to find the sets of PANs, we start from each terminal node 
and stop after reaching a certain type of node. This is a search process that usually stops 
way before going through all nodes at present. Assuming that the algorithm starts with $N_0$ pre-deployed clusters, stops after $n_t$ iterations, $\nu^{(k)}$ is the number of intermediate ANs added in iteration $k$ with $\nu^{(0)}=0$, and $\Gamma(k)= N_0 + \sum_{i = 0}^k \nu^{(i)}$ represents the total number of ANs after iteration $k$. Thus, the worst case time complexity of Step 4 is $\mathcal{O}(\Gamma(k-1))$ at $k$-th iteration. However, the average time complexity 
is much shorter as the search stops way before $\Gamma(k-1)$.
Step 2 and Step 5 takes constant time which can be ignored.

In summary, the worst case 
time complexity of EGDO algorithm is in the order of 
\begin{equation*}
\mathcal{O}(N_0^2\log{N_0}) + \sum_{k=1}^{n_t} \mathcal{O}(\Gamma(k-1)) < \mathcal{O}\left[ N_0^2\log{N_0} + n_t\Gamma(n_t)\right]
\end{equation*}
However, the average time complexity is much shorter considering the fact that the search process of Step 4 stops 
before $\Gamma(k-1)$ 
as showed by our experiments in Section \ref{secRes}.

\subsubsection{Complexity of GDO Algorithm}
In comparison, we analyze the complexity of the other steps of GDO algorithm. In Step 1
the time 
complexity of calculating the distance weight matrix between all nodes is $\mathcal{O}(N_0^2)$ provided that there are 
$N_0$ pre-deployed clusters. 
Similarly, 
the time complexity of Step 1 is in the order of 
$\mathcal{O}(N_0^2 \log{N_0})$.
Since the GDO algorithm recalculates the MST after each iteration and the number of nodes in MST is increasing, the runtime accumulates.  
With the same definition of $n_t$, $ \nu^{(k)}$, and $\Gamma(k)$,
the time complexity of GDO algorithm is 
\begin{equation*}
\sum_{k = 0}^{n_t} \mathcal{O}\left[ \Gamma(k)^2\log{\Gamma(k)}\right]  < \mathcal{O}\left[ n_t\Gamma(n_t)^2 \log{\Gamma(n_t)} \right] 
\end{equation*}
The worst case time complexity of GDO algorithm is hence in the order of 
$\mathcal{O}\left[ n_t\Gamma(n_t)^2 \log{\Gamma(n_t)} \right] $. 
Even though the bound is not tight, 
GDO algorithm has a much higher time complexity than EGDO
as presented in Section \ref{secRes}. 

\section{Experimental Results}
\label{secRes}
In this section,  
we first compare the results of our algorithm with those of exhaustive search algorithm. The latter serves as the benchmarking baseline finding the global optimal solution to the problem 
of node placement albeit with a very high time complexity. 
We show that our algorithm provides results close to the optimal solution given 
by exhaustive search within a limited area where the time complexity of exhaustive search is affordable.
Then, we compare the performance of EGDO algorithm with those of GDO and  
variants of the SMT \cite{MSTh} algorithm. 
Our experimental results cover AN cost, 
i.e., the number of intermediate ANs, runtime, robustness, and the effects of HCS.

\subsection{Comparison with Exhaustive Search Algorithm}
In this subsection, 
we compare the AN cost of EGDO algorithm with exhaustive search algorithm, 
without considerations of runtime,
in order to show our EGDO algorithm is in fact producing results close to the 
global optimal solution.

In order to examine the deviation of EGDO solution from the globally optimal solution, we run experiments in a field of $4500 \times 4500m^2$, with $r = 50m$ and $R = 350m$. The selection of parameters allows for completing exhaustive search experiments in realistic time. Fig. \ref{greedy} gives the results. The horizontal axis is the number of 
pre-deployed clusters varying in the range from $2$ to $25$ and the vertical axis is the AN cost. 
For each point on the $x$-axis, we run 50 different configurations and record the number of intermediate ANs used. Then, we fit the data to a polynomial curve. The blue and black 
curves show the AN cost of EGDO and exhaustive search algorithms, respectively. 
While the AN cost of EGDO is always higher than that of exhaustive search, 
the largest gap observed between two curves along the vertical axis is less than $10\%$.
Without being able to offer a mathematical proof, the gap falls in the 
range of $1.1$-approximation ratio.
It can also be observed that both curves start dropping beyond a certain point. This is because when the number of pre-deployed clusters grows, the network becomes denser and requires less ANs to establish connectivity.  This aspect will be further investigated in the following subsections. 

On the aspect of runtime, the completion time of EGDO algorithm is in the range of $1$ to 
$2$ seconds in our simulation setting. However, the exhaustive search algorithm 
takes from several minutes to over ten hours to complete within the same simulation settings.
\begin{figure}[!ht]
\centering
\includegraphics[width = 0.45\textwidth]{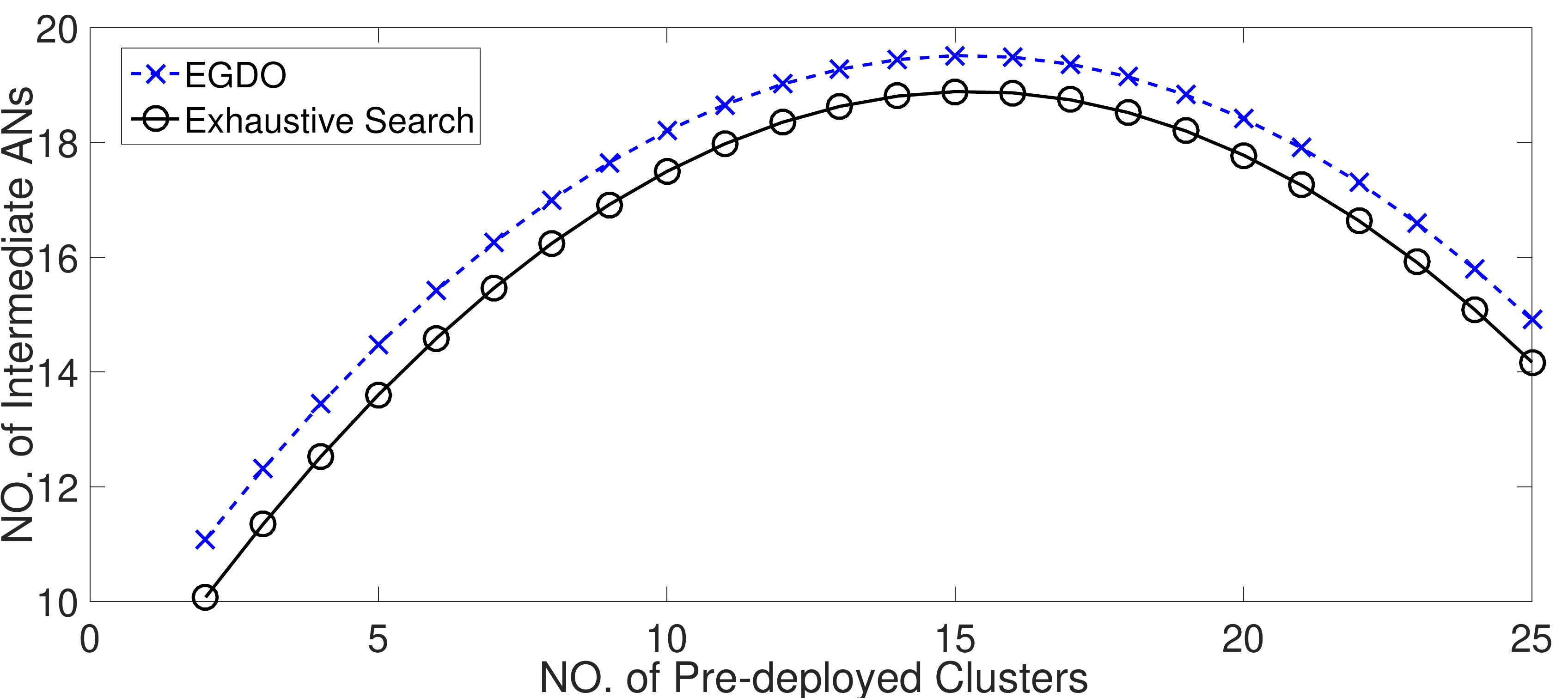}
\caption{An AN cost comparison of EGDO and exhaustive search algorithms.}
\label{greedy}
\end{figure}

\subsection{Performance Comparison of SMT and EGDO Algorithm}
In this subsection, we compare the performance of SMT, GDO, and EGDO algorithms 
measured by the minimum AN cost and 
runtime.
When comparing the two classes of algorithms, we also consider the fact that 
GDO and EGDO algorithms dynamically
update minimum spanning trees while the original SMT algorithm forms the 
minimum spanning tree once statically. 
Therefore, we modify the SMT algorithm to become a dynamic algorithm in which the minimum spanning tree is recalculated after connecting every edge. We refer to the original static SMT algorithm as StaSMT and the revised 
dynamic SMT algorithm as DynSMT. Because SMT 
algorithms model the communication range of a node as a disk 
while GDO and EGDO do so as a hexagon embedded within the disk, 
SMT algorithms cover distance with a smaller number of ANs in average. 
Yet one should notice that this is not a fair comparison as a circle always covers a longer distance than the hexagon embedded in it. 
Additionally, it is important to note that DynSMT and GDO algorithms recalculate 
the entire spanning tree each time after a pair of clusters are connected. Hence, the time complexity 
of these algorithms is higher than those of StaSMT and EGDO algorithms.
\begin{figure}[!ht]
	\centering
	\includegraphics[width = 0.5\textwidth]{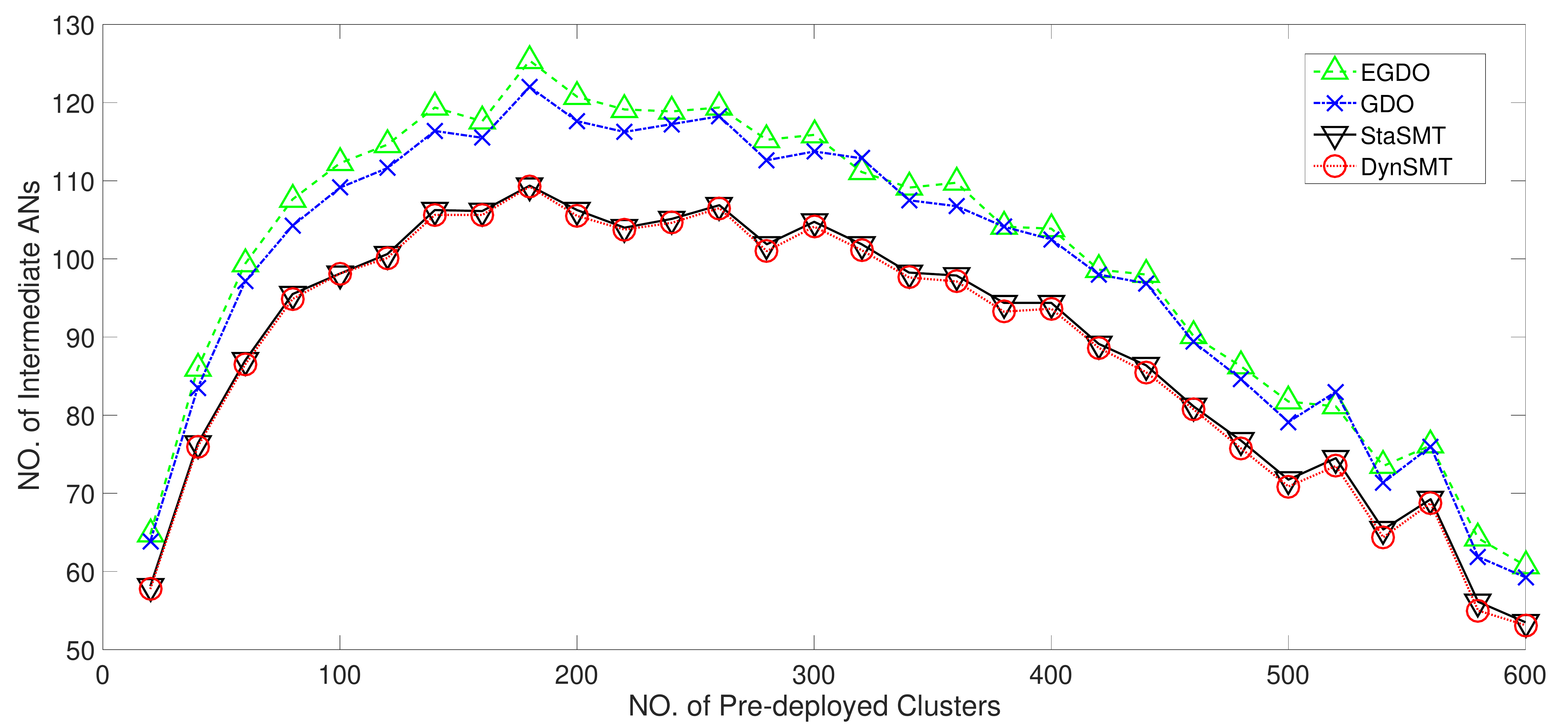}
	\caption{An AN cost comparison among StaSMT, DynSMT, GDO, and EGDO algorithms.}
	\label{4Alg}
\end{figure}

The experiments are conducted in a $200km \times 200km$ field with $r = 50m$ and 
$R = 4550m$. 
Fig. \ref{4Alg} provides an AN cost comparison among the StaSMT, DynSMT, GDO, and EGDO algorithms 
for a fixed field size. The 
results show that GDO algorithm uses an average  
of $10\%$ more AN resources than StaSMT. 
Further, the EGDO algorithm sometimes consumes a slightly larger number of AN resources 
than GDO algorithm, because in doing local modification it might miss some larger scale 
variations in network graph topology caused by a newly added AN. However,  
comprehensive experimental results have shown that these 
differences are negligible. 
Interestingly, it is also observed that there is no significant difference between the performance 
of the two variants of the SMT algorithm. This is alluded to the fact that unlike the GDO 
algorithm, the recalculated minimum spanning tree in DynSMT is not much different from 
the previously calculated minimum spanning tree obtained by StaSMT algorithm.
The results of all four algorithms show an initial rise followed by a drop alongside some variations. 
The rise is related to the fact that an increase in the number of pre-deployed clusters $N$
in a sparse network requires utilizing more intermediate ANs. As the $N$ grows even larger 
within a fixed field size, the sparse network evolves to a dense network covering most of 
the field with AN gateways thereby reducing the number of intermediate ANs. All four algorithm 
tend to use the same number of intermediate ANs as the value of $N$ grows to $600$ in this setting.
\begin{figure}[!ht]
	\centering
	\includegraphics[width = 0.5\textwidth]{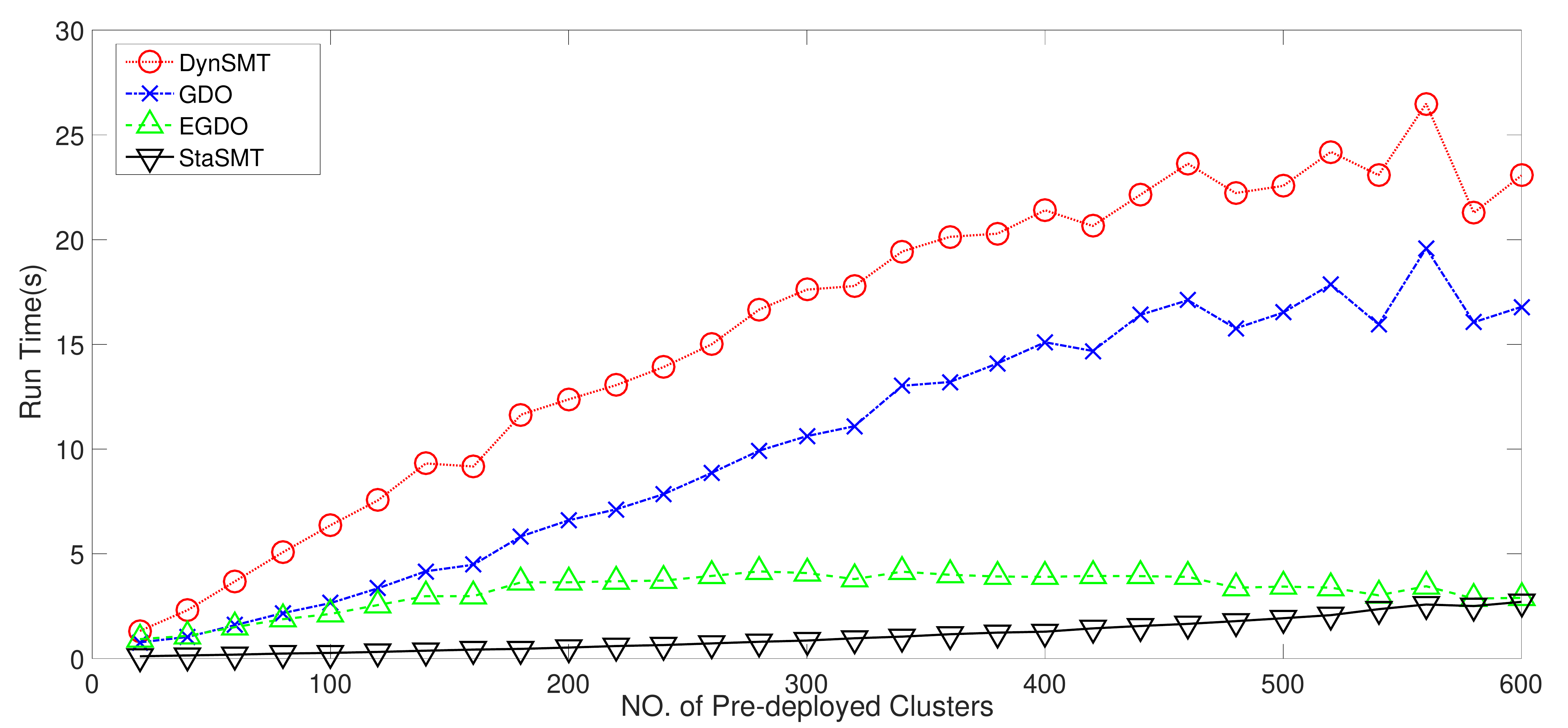}
	\caption{A runtime comparison of StaSMT, DynSMT, GDO, and EGDO algorithms.}
	\label{4Algtime}
\end{figure}

Fig. \ref{4Algtime} includes a comparison of runtimes among the StaSMT, DynSMT, GDO, and EGDO algorithms 
for the same fixed field size. 
It is observed that StaSMT has the lowest runtime because it only forms the minimum spanning tree once.
DynSMT has a much longer runtime than the other three algorithms in general.
Among three dynamic algorithms GDO, EGDO, and DynSMT, EGDO has the shortest runtime by far.
While the runtime is generally higher than that of StaSMT, it gets closer to that of StaSMT for values of $N$ greater than $500$. This behavior is related to the fact that the cost of calculating the minimum spanning tree increases as $N$ grows and also a smaller number of intermediate nodes are needed.

Considering the fact that the AN cost performance of DynSMT is slightly better than that of StaSMT but its runtime is significantly longer, we conclude that the advantage of DynSMT does not justify its increased time complexity. 
Therefore, we mainly compare the performance of StaSMT and EGDO algorithms in the rest of our experiments,
considering comparable performance of GDO and EGDO but much better time complexity of EGDO. 
We note that EGDO algorithm uses an additional $10\%$ AN resources in average 
due to the use of a hexagon instead of a circle to represent the communication range 
of a node and also has a slightly longer runtime compared to StaSMT algorithm. 
However, it offers much better robustness characteristics as reported in the next subsection.

\subsection{Partial and Global Robustness Tests}
In this subsection, we evaluate the robustness of network connectivity algorithms 
by applying perturbations to the position of nodes. 
In each experiment, we first establish global network connectivity applying EGDO and StaSMT algorithms. 
Once connectivity is established, we introduce random perturbations to the position 
of pre-deployed clusters. This scenario is referred to as partial perturbation as it does not 
perturb the position of intermediate ANs added for establishing connectivity. 
We also conduct additional robustness experiments
in which all existing ANs after node placement are perturbed. 
We refer to such experiments as global perturbation experiments. 
A perturbation constitutes a random directional displacement of the AN from its original 
position 
by a fixed distance $4r$. The fixed value of perturbation displacement $4r$, albeit in random direction, 
represents the experimental finding within the topology of our experiments introducing the 
most pronounced impact on network connectivity without completely partitioning the network.
In each experiment and after applying perturbation, we test global connectivity.

We conduct our experiments in different field sizes but report sample results for a $200 km \times 200 km$ field. 
The set of pre-deployed clusters are distributed randomly following a uniform Poisson point process in the field of experiment. 
We set parameters $r$ and $R$ at $50m$ and $4550m$, respectively. 
The number of clusters varies from $20$ to $160$ by a step size of $20$. 
Before reporting our results, we define a measure to quantify robustness. 
Equation (\ref{RoFa}) gives the definition of the measure referred to as robustness factor (RF). 
The RF measure not only takes into consideration the probability of staying connected after 
perturbation, but also the number of intermediate ANs used to establish connectivity. 
\begin{equation}
RF = (Pr_{EGDO} - Pr_{SMT}) \times	\dfrac{\eta_{SMT}}{\eta_{EGDO}}
\label{RoFa}
\end{equation}  
In Equation (\ref{RoFa}), $Pr_{EGDO}$ and $Pr_{SMT}$ represent the probabilities 
of remaining connected after perturbation is applied to 
the cases of EGDO and SMT algorithms, respectively.
Accordingly, the calculation of $Pr_{EGDO}$ in perturbation tests is described below. 
In each experiment, the global connectivity count is increased by one if the network remains connected after 
applying perturbation.
The value of $Pr_{EGDO}$ is identified 
by dividing the global connectivity count to the total number of experiments, which is 500 here. 
Similarly, $Pr_{SMT}$ is identified. The numbers $\eta_{EGDO}$ and $\eta_{SMT}$ represent 
the number of intermediate ANs used to establish global connectivity in EGDO and SMT algorithms. 

Because EGDO algorithm uses hexagons instead of circles, it generally covers a given distance along 
a line with a larger number of ANs than SMT. However, placing nodes towards the center of geometry 
within HCS offsets some of the impact. Generally speaking,  
the EGDO algorithm is observed to use a larger number 
of intermediate ANs than StaSMT. In return, it offers a higher level of robustness.
\begin{figure}
	\centering
	\includegraphics[width=0.47\textwidth]{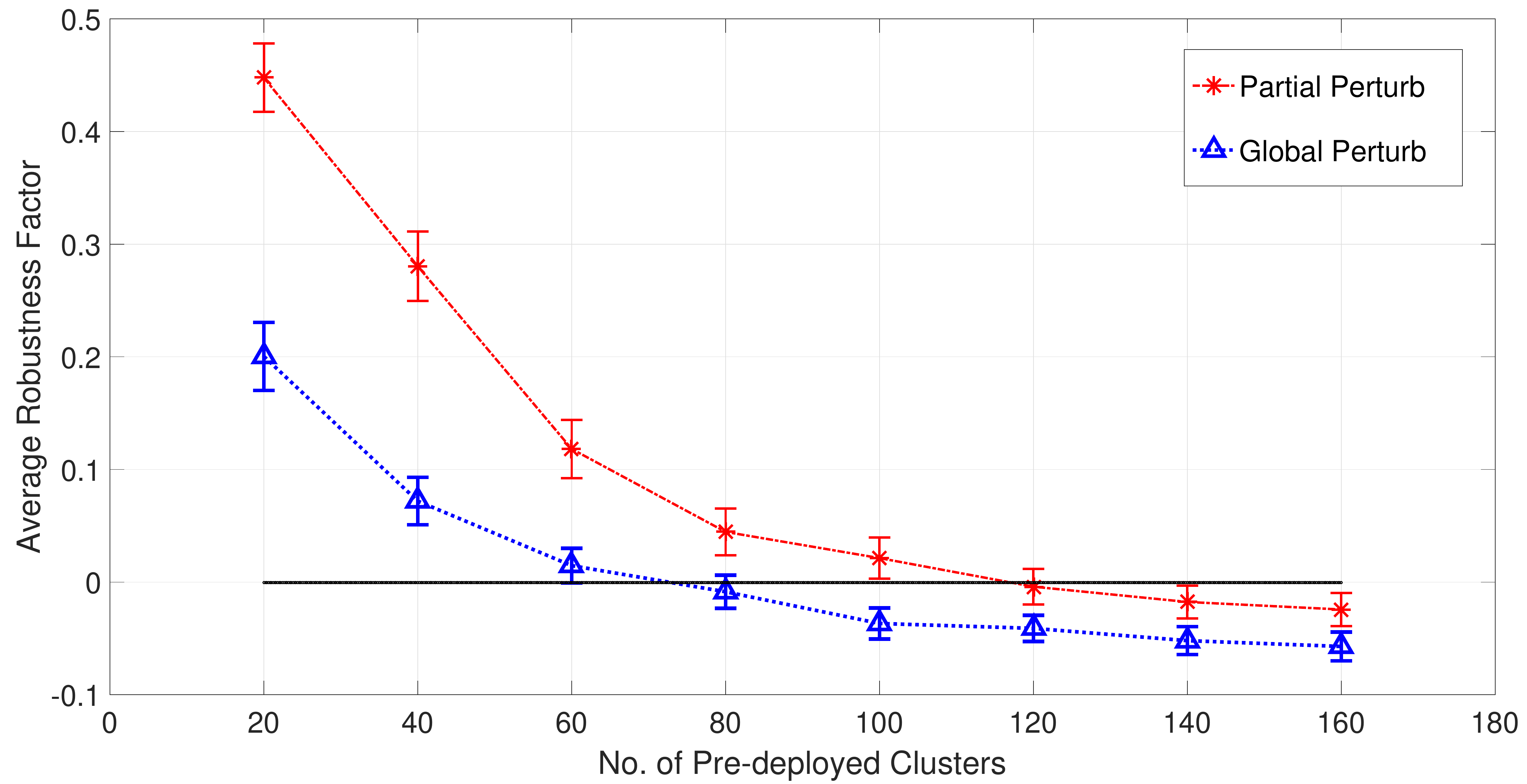}
	\caption{A drawing of average robustness factor as a function of the number of pre-deployed clusters in perturbation tests.}
	\label{GlbPtb}
\end{figure}

Experimental results of partial perturbation 
within $95\%$ confidence intervals are shown by red line in Fig. \ref{GlbPtb}. 
The horizontal axis shows the number of pre-deployed  disconnected clusters before 
we apply any node placement algorithm. The vertical axis is the 
value of RF averaged over 500 different scenarios at each given number of pre-deployed clusters. We notice that the 
value of RF is in the range $[-1, 1]$ as two probability measures are within $[0, 1]$ and the EGDO algorithms is expected 
to use a larger number of ANs than the SMT algorithm. A positive value of RF closer to 1 means that EGDO algorithm 
achieved much better robustness characteristics while using a relatively small number of ANs. 
An inspection of the reported results of Fig. \ref{GlbPtb} reveals that the EGDO algorithm shows a significant 
performance advantage in sparse networks. However and as the number of pre-deployed clusters increases, 
there is a threshold of cluster density beyond which EGDO algorithm will lose its 
advantage over SMT algorithm. More information about the threshold will be given in the next subsection.

Besides partial perturbation tests, we also conduct global perturbation experiments. In these tests, we perturb the 
positions of pre-deployed AN gateway nodes as well as intermediate AN nodes. All ANs within the connected network 
graph are displaced along a random direction by an amplitude of $4r$. The value of RF is calculated in the same way 
as explained before. 
The test results within $95\%$ confidence intervals  
are shown in Fig. \ref{GlbPtb} by the blue curve. Compared to partial perturbation test results, the RF 
values in global perturbation tests show a lower starting point and a faster drop rate as the density of clusters grows higher. 
The results show that the difference in perturbation robustness is very significant in some scenarios. Specifically, 
it is observed that the value of $Pr_{EGDO}$ is one to two orders of magnitude larger than the value of $Pr_{SMT}$ 
in some instances. 

\subsection{Inspection of Cluster Density Threshold Value}
\begin{figure}[!ht]
	\centering
	\begin{subfigure}{0.48\textwidth}
		\includegraphics[width=\textwidth]{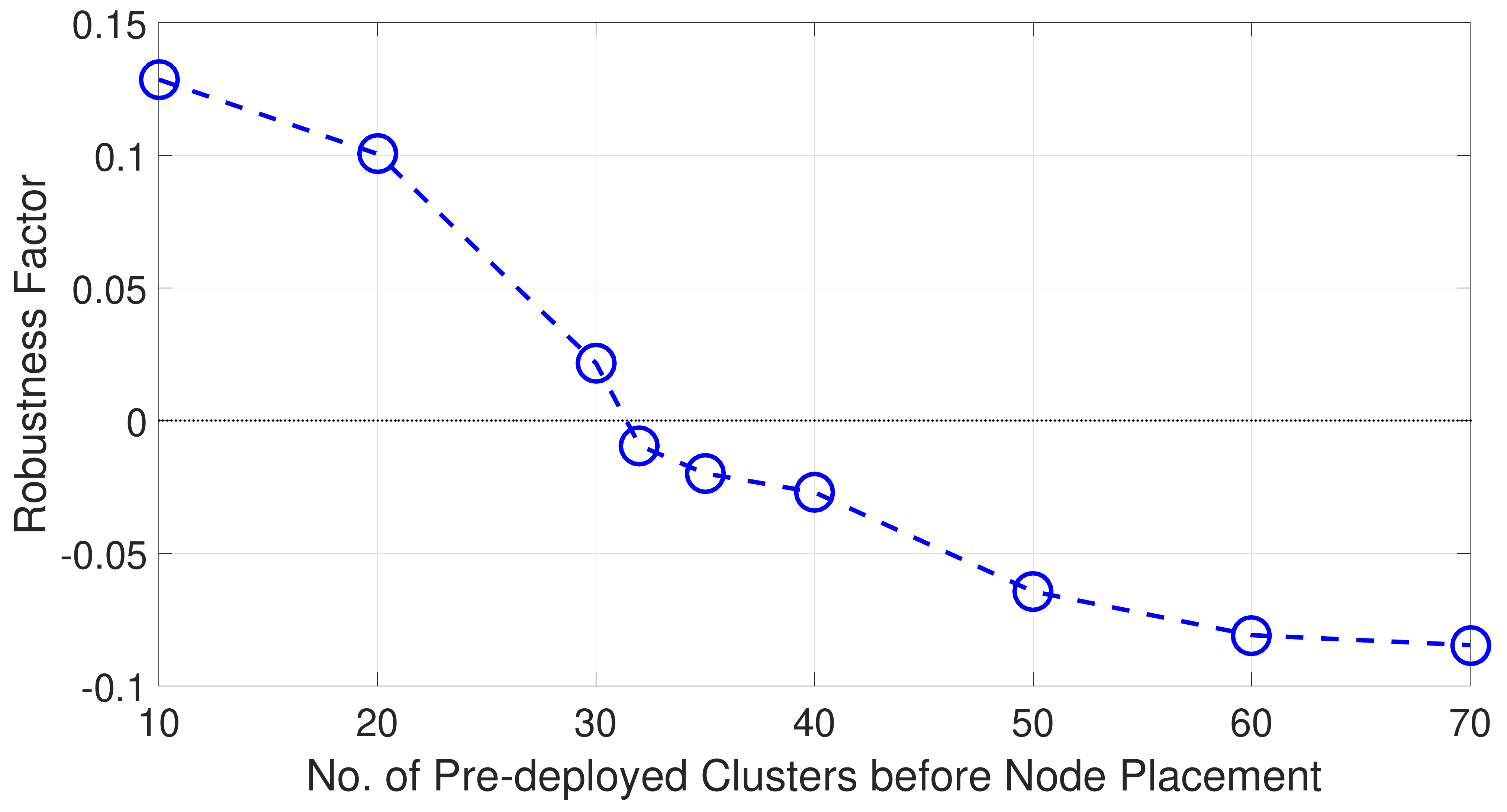}
		\caption{$100km \times 100km$}
		\label{F11th}
	\end{subfigure}
	\begin{subfigure}{0.5\textwidth}
		\includegraphics[width=\textwidth]{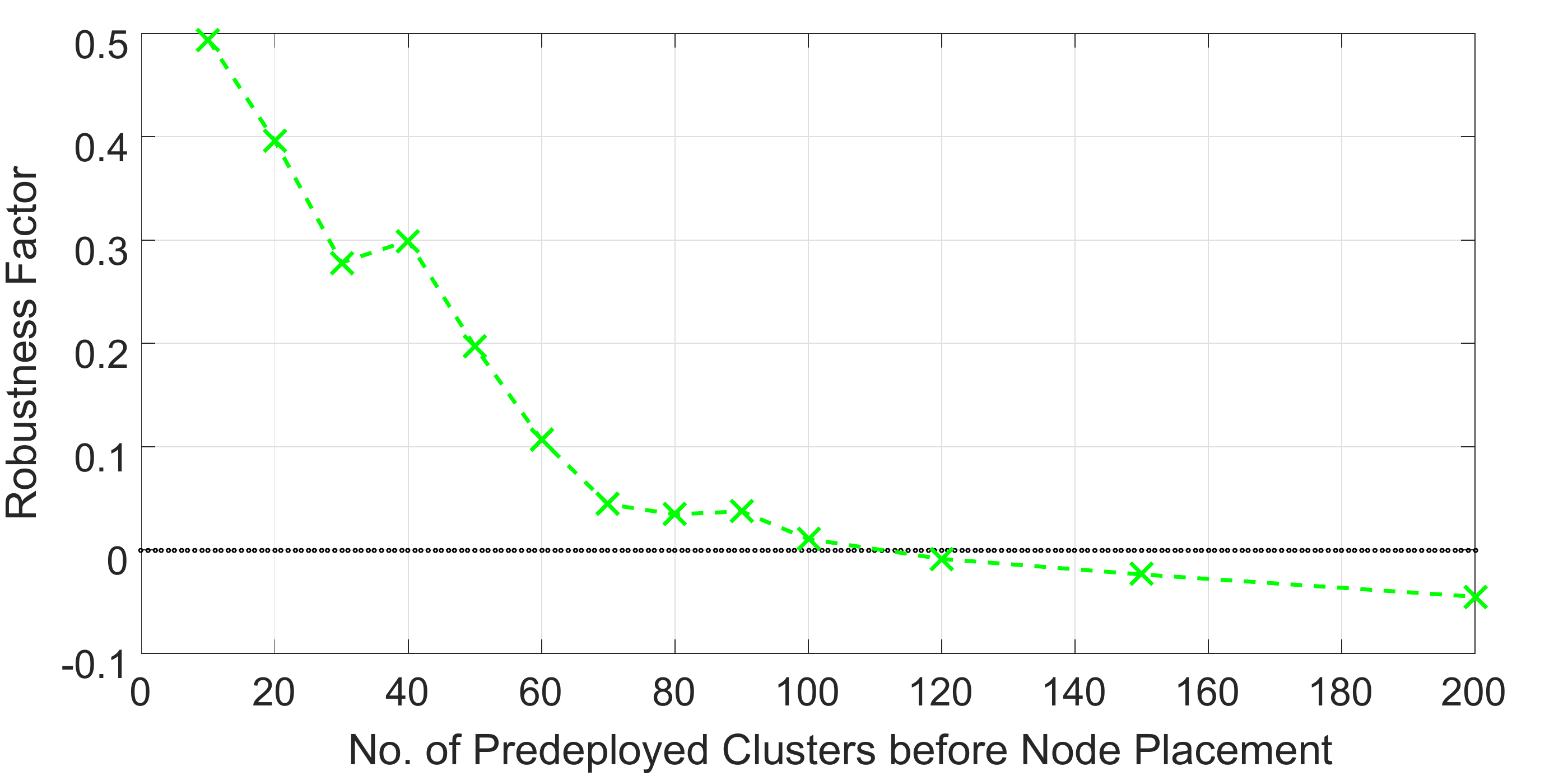}
		\caption{$200km \times 200km$}
		\label{F22th}
	\end{subfigure}
	\begin{subfigure}{0.48\textwidth}
		\includegraphics[width=\textwidth]{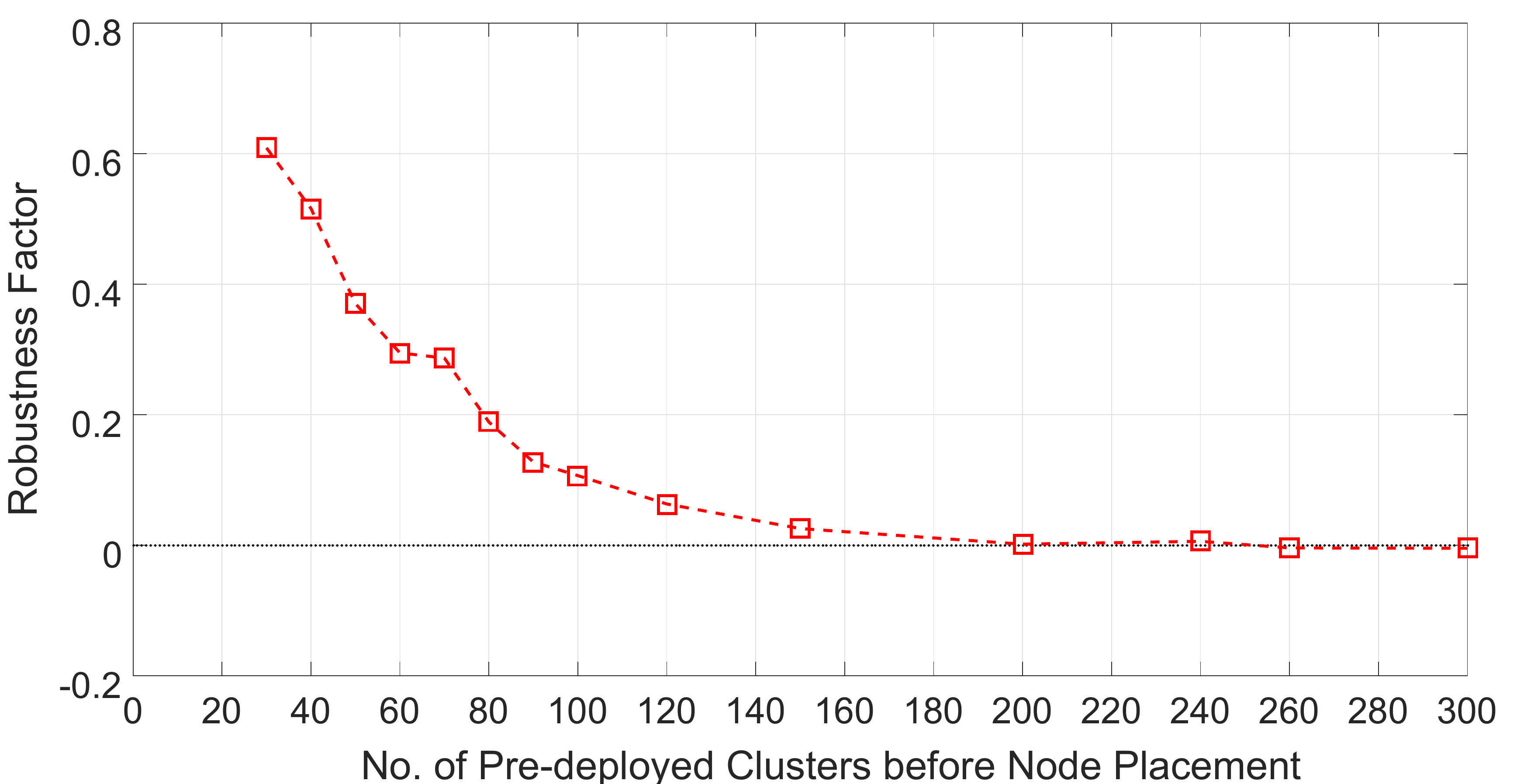}
		\caption{$300km \times 300km$}
		\label{F33th}
	\end{subfigure}
	\caption{The identification of threshold $\tau$ for different field sizes.}
\end{figure}

As described in the previous subsection, we observe a threshold of AN density beyond which the network can no longer 
be regarded as sparse. The threshold to which we refer as $\tau$ denotes a cluster density value 
passed which the EGDO algorithm offers no advantage compared to the SMT algorithm. 
In this subsection, we raise a hypothesis that the value of threshold $\tau$ is related to the density of ANs, 
namely, the field area divided by the total area of AN coverage. 
We note that both SMT and EGDO algorithms seek to minimize the AN cost. Yet, the SMT algorithm attempts at reducing the total 
distance covered by ANs while the EGDO algorithm tries to reduce the AN overlap areas. 
In essence, minimizing the area of overlap is no longer meaningful when the AN density goes beyond a certain value. 
As cluster density grows, the average overlap area increases. Thus, the robustness of 
SMT algorithm will inherently improve and EGDO algorithm no longer offers any robustness advantage.
To numerically validate this hypothesis, we conduct experiments on three different field sizes, of $100km \times 100km$, $200km \times 200km$, and $300km \times 300km$. We apply the partial perturbation test to each field and vary the 
number of pre-deployed clusters. The threshold value for each field size is identified as where the plots of RF versus AN 
cross the horizontal axis. Perturbation experiments are repeated 100 times in each scenario and for every number of 
clusters. Further, we test 100 different scenarios and report the average results. 
In Fig. \ref{F11th}, Fig. \ref{F22th}, and Fig. \ref{F33th}, the RF curves approximately cross 
the $x$-axis at values of $32$, $120$, and $260$.

Table \ref{tb1} records average intermediate AN cost for each given number of pre-deployed clusters in the test of the 
$100km \times 100km$ field. Table \ref{tb2} and Table \ref{tb3} show the AN cost in the tests of  
$200km \times  200km$ and $300km \times 300km$ field sizes, respectively. 
As described above, the threshold is defined as
\begin{equation}
\tau \propto \dfrac{\text{Field Area}}{\text{AN area} \times \text{No. of ANs}}
\end{equation}
The threshold values $\tau_1$, $\tau_2$, and $\tau_3$ are calculated below for $100km \times 100km$, 
$200km \times 200km$, and $300km \times 300km$ field size scenarios where $c$ absorbs all constants.
\begin{equation*}
\tau_{1} \approx c\times \frac{10^{10}}{4550^2 \times	(32 + 27.1)} = 0.017 \times  \frac{10^{10}}{4550^2}c
\end{equation*}
\begin{equation*}
\tau_{2} \approx c\times \frac{4\times 10^{10}}{4550^2 \times	(120 + 102.7)} = 0.018 \times \frac{10^{10}}{4550^2}c
\end{equation*}
\begin{equation*}
\tau_{3} \approx c\times \frac{9\times 10^{10}}{4550^2 \times	(260 + 227.2)} = 0.018 \times \frac{10^{10}}{4550^2}c
\end{equation*}
From the calculations, 
the values of $\tau_{1}$, $\tau_{2}$, and $\tau_{3}$ are all around to $0.018c$. 
While not reported here, we have observed similar patterns with different values of $r$, $R$, and field sizes.
The results numerically support our hypothesis that the value of threshold $\tau_i$ 
is related to the ratio of the field area and the total area covered by ANs.

\begin{table}[!ht]
\caption{Average AN costs in different field sizes.}
\centering
\begin{subtable}{0.5\textwidth}
\centering
\begin{tabular}{|c||c|c|}
 \hline
No. of Clusters & EGDO AN cost & SMT AN Cost \\
 \hhline{|-|-|-|}
 10 & 20.2 & 18.1 \\
  \hline
 20 & 26.6 & 24.0 \\
  \hline
 30 & 29.2 & 26.1 \\
  \hline
 \textbf{32} & \textbf{29.7} & \textbf{27.1} \\
  \hline
 35 & 30.3 & 27.3 \\
  \hline
 40 & 30.9 & 28.6 \\
  \hline
 50 & 30.9 & 28.2 \\
  \hline
 60 & 30.5 & 28.0 \\
  \hline
 70 & 30.2 & 28.4 \\
  \hline
 80 & 29.5 & 27.2 \\
  \hline
\end{tabular}
\caption{Measures of AN cost in $100km \times 100km$ field test.} 
\label{tb1}   
\end{subtable}
\begin{subtable}{0.5\textwidth}
\centering
\begin{tabular}{|c||c|c|}
 \hline
No. of Clusters & EGDO AN cost & SMT AN Cost \\
 \hhline{|-|-|-|}
 10 & 46.5 & 41.6 \\
  \hline
 20 & 63.7 & 56.9 \\
  \hline
 30 & 74.8 & 67.4 \\
  \hline
 40 & 84.1 & 75.5 \\
  \hline
 50 & 91.4 & 81.9 \\
  \hline
 60 & 96.0 &  86.2 \\
  \hline
 70 & 100.4 & 90.8 \\
  \hline
 80 & 103.9 & 93.8 \\
  \hline
 90 &  105.2 & 95.2 \\
  \hline
 100 & 109.0 & 99.8 \\
  \hline
 \textbf{120} & \textbf{112.9} & \textbf{102.7} \\
  \hline
 150 & 116.3 & 106.4 \\
  \hline
 200 & 117.9 & 109.1 \\
 \hline
\end{tabular} 
\caption{Measures of AN cost in $200km \times 200km$ field test.} 
\label{tb2}
\end{subtable}
\begin{subtable}{0.5\textwidth}
\centering
\begin{tabular}{|c||c|c|}
 \hline
No. of Clusters & EGDO AN cost & SMT AN Cost \\
 \hhline{|-|-|-|}
 30 & 127.8 & 120.7 \\
  \hline
 40 & 136.5 & 121.6 \\
  \hline
 50 & 148.7 & 133.3 \\
  \hline
 60 & 161.8 & 145.3 \\
  \hline
 70 & 170.6 & 153.0 \\
  \hline
 80 & 179.0 & 159.8 \\
  \hline
 90 & 187.0 & 169.3 \\
  \hline
 100 & 193.2 & 173.5\\
  \hline
 120 & 205.4 & 184.8\\
  \hline
 150 & 218.3 & 197.0 \\
  \hline
 200 & 233.8 & 211.3 \\
 \hline
 240 & 244.3 & 222.3 \\
 \hline
 \textbf{260} & \textbf{247.3} & \textbf{227.2} \\
 \hline
 300 & 252.1 & 231.6 \\
 \hline
\end{tabular}
\caption{Measures of AN cost in a $300km \times 300km$ field test.} 
\label{tb3}
\end{subtable} 
\end{table}

\subsection{An AN Cost Comparison of SMT and EGDO in HCS}
Since EGDO algorithm utilizes hexagonal tiles instead of radial disks to model the range of advantaged nodes, one can 
raise the question as to what happens when applying SMT algorithm to a network using hexagonal tiling. In order to answer this question, we run an additional experiment. 
\begin{figure}[!ht]
	\centering
	\includegraphics[width=0.5\textwidth]{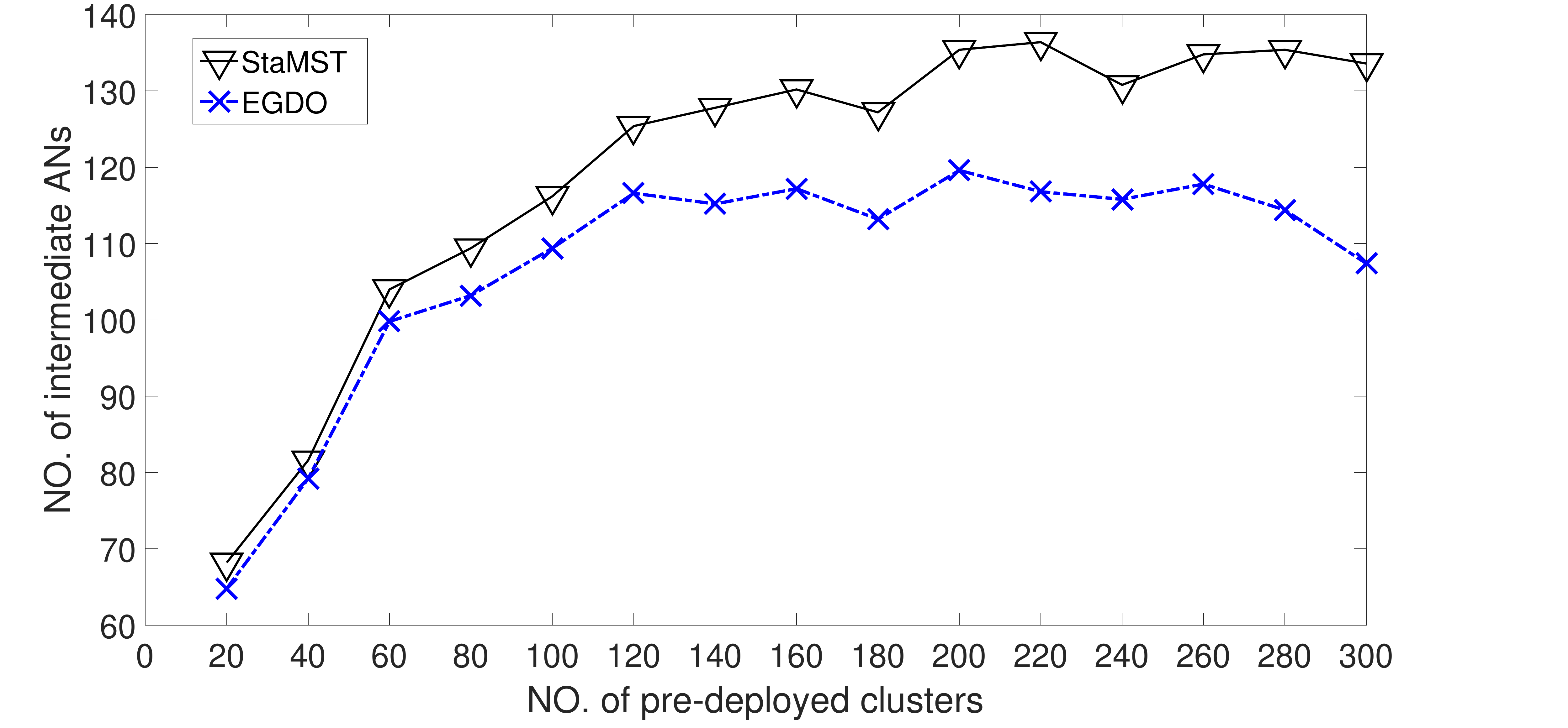}
	\caption{An AN cost comparison of SMT and EGDO algorithms in HCS.}
	\label{GDOMSTperf}
\end{figure}

Our experimental setting  is described as follows. Within an area of $120km \times 120km$, 
we randomly deploy a number of clusters ranging from $50$ to $500$ at an increasing step size of $50$. 
In this experiment, we set $r$ and $R$ at $30m$ and $2730m$, respectively. 
For each fixed number of pre-deployed clusters, we run $100$ different randomly distributed scenarios. 
Then, we average the number of ANs to report our results. 
Fig. \ref{GDOMSTperf} compares the AN cost of establishing connected graphs, 
through SMT and EGDO algorithms with the same level of built-in robustness,
as a function of the number of pre-deployed clusters. 
In this setting, the network is no longer considered sparse 
once the number of pre-deployed clusters reaches $250$.

It can be observed from the results that EGDO algorithm performs slightly better
when the number of clusters is small. As the number of pre-deployed clusters grows, the EGDO algorithm intends to 
use even a smaller number of ANs than the SMT algorithm to establish full connectivity. The number of ANs used by 
the EGDO algorithm is typically $10\%$ to $20\%$ less than those used by the SMT algorithm
for as long as the network is sparse, i.e., the number of pre-deployed clusters is less than $250$. 
Interestingly, the AN cost advantage of the EGDO algorithm becomes even more apparent
for a dense network with more than $250$ pre-deployed clusters. 
However, the advantages of EGDO over SMT in dense networks are not of
high significance because a dense network naturally offers robustness.

While not shown here, it is also important to note that representing the communication 
range of an AN with a reduced radius circle or a reduced edge square in CCS
leads to utilizing an increased number of ANs in establishing connectivity.

\section{Conclusion}
In this paper, we investigated robust connectivity in two-tiered heterogeneous network graphs through systematic placement of advantaged nodes. Our method was developed utilizing a so-called hexagonal coordinate system (HCS) in which we developed an extended algebra. We formulated and solved (within bounds) an 
NP-hard problem addressing graph connectivity. Further, we developed a class of 
near-optimal yet low complexity geometric distance optimization (GDO) algorithms 
approximating the original problem. 
Experimental results showed the effectiveness of our proposed GDO algorithms 
measured in terms of advantaged node cost and robustness of connectivity 
in sparse networks in comparison
with variants of exhaustive search and 
Steiner minimum tree (SMT) algorithms. %  
Our experimental results also offered a couple of additional important insights. 
First, 
it was commonly observed that our 
proposed GDO algorithms lost their advantages in comparison with SMT algorithms 
past a density threshold value due to the higher density of nodes.
Second, below the specific sparsity threshold, 
our proposed algorithms used smaller numbers of AN nodes if 
we applied HCS representation to SMT algorithms in order to improve robustness.

\begin{IEEEbiography}%
[{\includegraphics[width=1in,height=1.25in,clip,keepaspectratio]{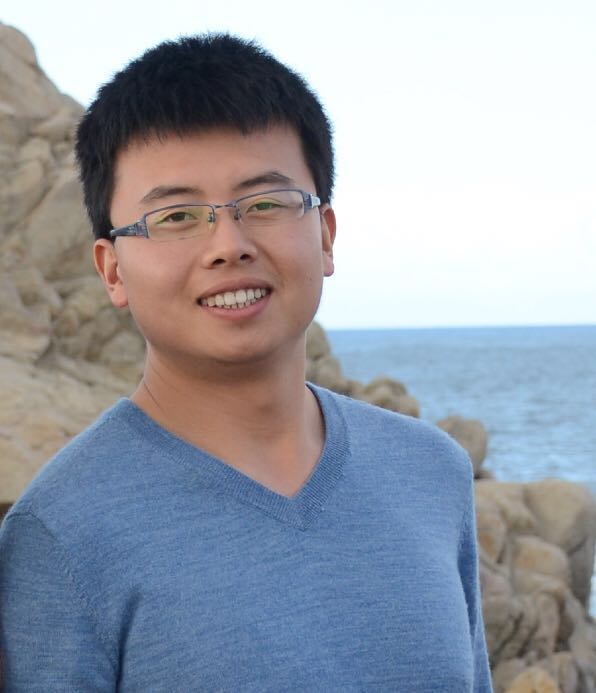}}]{Kai Ding}
received his B.S degree in Beijing University of Aeronautics and Astronautics(BUAA), Beijing, China, in 2012. He earned his M.S. degree in 2014 from the Department of Mechanical and Aerospace Engineering in UC, Irvine. Currently, he is pursuing Ph.D. degree in the same department. His research interests are in the areas of network control, wireless communications, and control systems. 
\end{IEEEbiography}
\begin{IEEEbiography}%
[{\includegraphics[width=1in,height=1.25in,clip,keepaspectratio]{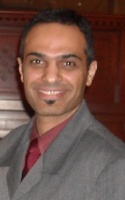}}]{Homayoun Yousefi'zadeh} (SM) 
received E.E.E and Ph.D. degrees from the Dept. of EE-Systems at
USC in 1995 and 1997, respectively. Currently,
he is an Adjunct Professor at the Department of
EECS at UC, Irvine. 
In the recent past, he was a Consulting
Chief Technologist at the Boeing Company 
and the CTO of TierFleet.
He is the inventor of several US patents,
has published more than seventy scholarly reviewed
articles, and authored more than twenty design articles associated with
deployed industry products. Dr. Yousefi'zadeh is/was with the editorial board of
IEEE Transaactions on Wireless Communications,
IEEE Communications Letters, IEEE Wireless Communications
Magazine, the lead guest editor of 
IEEE JSTSP the issue of April 2008, and Journal
of Communications Networks.
He was the founding Chairperson of systems' management
workgroup of the SNIA and a member of the
scientific advisory board of Integrated Media Services Center at USC. 
He is a Senior Member of the IEEE and the recipient of multiple
best paper, faculty, and engineering excellence awards.
\end{IEEEbiography}
\begin{IEEEbiography}%
[{\includegraphics[width=1in,height=1.25in,clip,keepaspectratio]{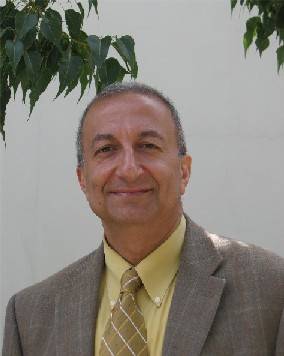}}]{Faryar Jabbari}
Faryar Jabbari (SM) received his PhD degree from UCLA in 1986. He is currently on the faculty of the Department of Mechanical and Aerospace Engineering Department of University of California, Irvine. His areas of interest are control theory and its applications, particularly to energy systems.  
He has served as an Associate Editor for the IEEE Transactions on Automatic Control and Automatica. He was the Program Chair for 2009 IEEE Conference on Decision and Control and 2011 American Control Conference.
\end{IEEEbiography}


\begin{thebibliography}{1-95}
\bibitem{MFLP}
Y. Levin and A. Ben-Israel, \emph{A heuristic method for large-scale multifacility location problems}, Comput. Oper. Res., 2004.

\bibitem{Survey}
M. Younis and K. Akkaya, \emph{Strategies and techniques for node placement in wireless sensor networks: A Survey}, Ad Hoc Networks 6(2008).

\bibitem{Gilbert}
E. N. Gilbert, \emph{Random plane networks},  Journal of the Society for Industrial and Applied Mathematics, 9 (1961).

\bibitem{MSTh}
G-H. Lin and G. Xue, \emph{Steiner tree problem with minimum number of Steiner points and bounded edge-length}. Information Processing Letters 69.2 (1999).

\bibitem{kruskal}
R.K. Ahuja, T.L. Magnanti, J.B. Orlin, \emph{Network Flows}, Prentice-Hall, Englewood Cliffs, NJ (1993)

\bibitem{Lloyd}
E.L. Lloyd, G. Xue, \emph{Relay node placement in wireless sensor networks}, IEEE Transactions on Computers 56 (1) (2007).

\bibitem{ApproxSMT}
D. Chen, D-Z Du, X-D Hu, G-H Lin, L. Wang, and G. Xue, \emph{Approximations for Steiner trees with minimum number of Steiner points}, Journal for Global Optimization 18: 17–33, 2000.

\bibitem{EBST}
D. Du, L. Wang and B. Xu, \emph{The Euclidean bottleneck Steiner tree
and Steiner tree with minimum number of Steiner points}, Computing and
Combinatorics ,Guilin 2001

\bibitem{largescale}
J. Tang, B. Hao, and A. Sen, \emph{Relay node placement in large scale wireless sensor networks}, Computer Communications, 29 (2006).

\bibitem{FTRelay}
B. Hao, J. Tang and G. Xue, \emph{Fault-tolerance relay node placement in wireless sensor networks: Formulation and Approximation}, in Proceeding of the Workshop on High
Performance Switching and Routing (HPSR), 2004.

\bibitem{faulttol}
G. Gupta, M. Younis, \emph{Fault-tolerant clustering of wireless sensor
networks}, Proceedings of IEEE WCNC, 2003.

\bibitem{FTinHet}
X. Han, X. Cao, E.L. Lloyd and C.-C. Shen, \emph{Fault-tolerant relay nodes placement in heterogeneous wireless sensor networks}, Proceeding of the 26th IEEE/AMC Joint Conference on Computers and Communications(INFOCOM'07), Anchorage AK, May 2007.

\bibitem{FTsensor}
J. Bredin, E. Demaine, M. Taghi Hajiaghayi, D. Rus, \emph{Deploying sensor networks with guaranteed fault tolerance}, MobiHOC, 2005


\bibitem{QWang}
Q. Wang, K. Xu, G. Takahara, H. Hassanein, \emph{Locally optimal relay node placement in heterogeneous wireless sensor networks}, In Proc. IEEE LCN, 2005.


\bibitem{JPan}
J. Pan, Y.T. Hou, L. Cai, Y. Shi, S.X. Shen, \emph{Topology control for wireless sensor networks}, Proceedings of ACM MOBICOM, 2003. 

\bibitem{XCheng}
X. Cheng, D.Z. Du, L. Wang, B. Xu, \emph{Relay sensor placement in wireless sensor networks}, ACM/Springer Journal of Wireless Networks 14, 3(2008).

\bibitem{Hou}
Y.T. Hou, Y. Shi, H.D. Sherali, \emph{On energy provisioning and relay node placement for wireless sensor networks}, IEEE Transactions on Wireless Communications 4 (5) (2005).


\bibitem{Lin}
Y. Lin, W. Yu, and Y. Lostanlen, \emph{Optimization of wireless access point placement in realistic urban heterogeneous networks}, in GLOBECOM, 2012.

\bibitem{AP}
S. Perumal, J.S. Baras, \emph{Aerial platform placement algorithm to satisfy connectivity and capacity constraints in wireless ad-hoc networks}, IEEE GLOBECOM, 2008.

\bibitem{NLi}
N. Li and J. C. Hou, \emph{Topology control in heterogeneous wireless networks: Problems and Solutions}, IEEE INFOCOM, 2004.

\bibitem{Doussee}
O. Doussee, F. Baccelli, P. Thiran, \emph{Impcat of interference on connectivity in ad-hoc and networks}, IEEE INFOCOM, 2003.

\bibitem{Lifetime1}
K. Xu, Q. Wang, H. Hassanein, G. Takahara, \emph{Optimal design
of wireless sensor networks: Minimum cost with lifetime
constraints}, Proc. IEEE WiMob 2005.

\bibitem{Traff}
Q. Wang, G. Takahara, H. Hassanein and K. Xu, \emph{On relay node placement and locally optimal traffic allocation in Heterogeneous wireless sensor networks}, IEEE LCN, 2005

\bibitem{Lifetime2}
Q. Wang, K. Xu, H. Hassanei, G. Takahara, \emph{Minimum cost guaranteed lifetime design for heterogeneous wireless sensor networks}, IEEE IPCCC, 2005.

\bibitem{Honey}
I. Stojmenovic, \emph{Honeycomb networks: Topological properties and communication algorithms}, IEEE Transaction on Parallel and Distributed Systems, 8(10)(1997)

\bibitem{MIMO}
H. Yousefi'zadeh, H. Jafarkhani, J. Kazemitabar, \emph{Outage probability metrics of connectivity for MIMO fading ad-hoc networks}, KICS/IEEE Journal of Communications and Networks, (2009).

\bibitem{GDO}
K. Ding, H. Yousefi'zadeh, \emph{A systematic node placement strategy for multi-tier heterogeneous network graphs,} In Proc. of IEEE WCNC, 2016.

\bibitem{Clustering}
N. Li, and J.C. Hou, \emph{Improving connectivity of wireless ad hoc networks}, The Second Annual International Conference on Mobile and Ubiquitous Systems: Networking and Services. IEEE, 2005.

\bibitem{handbook}
D.Z. Du, P.M. Pardalos, eds., 2013. \emph{Handbook of combinatorial optimization: supplement} (Vol. 1). Springer Science $\&$ Business Media.
\end{thebibliography}
\end{document}